\documentclass[runningheads]{llncs}

\usepackage[T1]{fontenc}
\usepackage{graphicx}
\usepackage{amssymb}
\usepackage{placeins}
\usepackage{pifont}
\usepackage{tikz}
\usepackage{xcolor}
\usetikzlibrary{patterns}
\usetikzlibrary{positioning,arrows.meta,fit,calc}
\usepackage{pgfplots}
\pgfplotsset{compat=1.18}
\usepgfplotslibrary{groupplots}
\usepackage{booktabs}
\usepackage{hyperref}
\usepackage{url}
\usepackage{amsmath}
\usepackage{mathtools}
\usepackage[ruled,vlined]{algorithm2e}
\SetCommentSty{small}
\usepackage{multirow}
\usepackage{float}
\usepackage{enumitem}
\setlist[itemize]{noitemsep,nolistsep}

\newcommand{\circledfill}[1]{\tikz[baseline=(C.base)]\node[fill=black,circle,inner sep=0.6pt](C){\scriptsize\color{white}#1};}

\newtheorem{assumption}{Assumption}

\begin{document}

\title{A Bayesian Correlated Equilibrium for Early Insider-Threat Detection}
\titlerunning{Bayesian Correlated Equilibrium for Insider-Threat Detection.}

\author{Javed M. Shah\inst{1}\orcidID{0009-0009-7472-5614} \and Ian A. Kash\inst{1}\orcidID{0000-0002-7826-8555} \and Natalie Parde\inst{1}\orcidID{0000-0003-0072-7499}}
\authorrunning{Shah et al.}
\institute{University of Illinois Chicago, Chicago, United States\\
\email{jshah93@uic.edu, iankash@uic.edu, parde@uic.edu}}

\maketitle

\begin{abstract}
We model insider threat detection as a dynamic Bayesian game in which a platform coordinates a committee of strategic certifiers to sustain equilibrium among honest users and detect malicious deviations before exfiltration. Certifiers and users operate under a Bayesian Temporal Correlated Equilibrium (BTCE), where a sealed-envelope correlating device issues private recommendations over time and obedience is verified at every on-path information state. Unlike Stackelberg formulations, BTCE coordinates heterogeneous certifiers without requiring commitment power. We incorporate present bias and loss aversion to capture impulsive escalation dynamics, enabling 1.7--4.5 days earlier detection than rational baselines. We prove three guarantees: (1)~calibrated intervention losses make recommended behavior a current-self best response despite behavioral biases, (2)~controlled evidence accumulation guarantees intervention in bounded expected time before exfiltration, and (3)~median aggregation confines implemented actions to the honest recommendation range when fewer than half of certifiers are Byzantine. On CERT r6.2 our mechanism achieves up to 28.3\% pre-exfiltration detection with false positives below 1.6\%, while both a transformer baseline and a streaming provenance approximation (HOLMESLite) achieve near-zero pre-exfiltration detection under comparable constraints.
\keywords{Bayesian beliefs, correlated equilibrium, insider threat}
\end{abstract}

\section{Introduction}
\label{sec:introduction}
Insider threats arise when authorized users misuse legitimate access to exfiltrate data, sabotage systems, or abuse privileges. While the detection of these threats remains one of the most intractable challenges in enterprise security, traditional approaches still rely on a theoretical framework that assumes a "rational actor". Existing insider risk modeling uses classical rational paradigms~\cite{LIU200875,joshi21} where malicious insiders are expected-utility maximizers with exponential discounting. Insiders are assumed to trade off immediate payoffs from reconnaissance and exfiltration against the perceived probability and cost of detection. We point to a growing body of empirical and theoretical work that shows systematic deviations from this benchmark. In fact, decision makers are found to often exhibit present bias and time inconsistency, placing disproportionate weight on immediate rewards relative to delayed consequences~\cite{FrederickLoewensteinODonoghue2002}. They also display reference dependence and loss aversion, evaluating outcomes relative to a psychological reference point and becoming risk-seeking in the loss domain~\cite{TverskyKahneman1992}. These phenomena are salient for insider scenarios. Disgruntled employees may perceive themselves to be in a loss domain, and opportunistic insiders may act impulsively when presented with a rare vulnerability, even if they previously intended to behave safely~\cite{Greitzer2018,legg2015visualizing}. 

This gap between modeled and actual insiders often explains why static anomaly detection fails to stop irreversible harm. We approach this as a dynamic multiagent interaction between platform users
and a committee of \emph{certifier} agents. Certifiers can be automated detectors, rule engines, or human reviewers. They  condition on the publicly recorded history \(H_u^t\), maintaining a
local posterior over latent user state, and recommend an intervention \(c_{i,u}^t \in A_c\) for each user \(u\) at time \(t\) (Def.~\ref{def:markov-behavioral}).
The platform implements a single outcome \(d_u^t=g(\mathbf c_u^t)\) by aggregating
\(\mathbf c_u^t\) with an order-statistic rule on \((A_c,\preceq)\). Here \(\preceq\) orders interventions by stringency, from low-friction monitoring to access suspension. Routine regimes use a median-type outcome, while high-risk regimes use a fail-safe max rule (Algorithm~\ref{alg:intervention}).

\paragraph{Motivating High Profile Exfiltration Episode.}
We draw attention to several high profile publicly documented insider episodes that illustrate precisely the out-of-equilibrium deviations that BTCE is designed to intercept. In Tesla--Tripp (2018) \cite{bloomberg2018teslasabotage}, sabotage following a denied promotion can be represented as a behavioral regime shift where an adverse organizational event raised the user's reference point \(r_u^t\) and moved the user into the loss domain. Actions such as code edits and scripted exports he later made are actually coordinated sequences rather than isolated anomalies. Standard security models failed to intercept Tripp because his access privileges remained valid and his behavior, while aggressive, looked like "technician work" to static systems. The BTCE model, however, would ingest the "precipitating event" (reassignment) to update the user's latent Behavioral State ($b_u^t$), dynamically computing an Insider Threat Score (ITS) (Def.~\ref{def:ITS}) and forcing earlier intervention. This would allow certifiers to flag the subsequent action sequence (modifying code, exporting data) as a coordinated attack.
More recently, the attack surface has increased even further with the proliferation of non-human identities such as service accounts, API keys and automated agents. As the boundaries of trust erode, the central challenge is building systems that stay secure when trust itself is weaponized. This paper asks: can we design a system that reliably detects and eliminates malicious actors under extreme sparsity before exfiltration begins?

\paragraph{Contributions}
We introduce a \emph{Bayesian Temporal Correlated Equilibrium} (BTCE) for a telemetry-driven governance mechanism: the BTCE characterizes consistent, history-dependent strategies for certifier recommendations, platform aggregation, and user responses. To keep the equilibrium verification operational, we use an on-path one-shot obedience check: at each realized information state, following the current recommendation must dominate any single-period deviation, with future play returning to the recommended protocol (Lemma~\ref{lem:oneshot}). By explicitly modeling the temporal evolution of
disgruntled users, we make three main contributions.
\circledfill{1} We give interpretable sufficient conditions under which calibrated intervention loss makes the recommended action a current-self best response in a Bayesian Temporal Correlated Equilibrium, even with Prospect-theoretic utilities and quasi-hyperbolic discounting (Theorem~\ref{thm:obedience}). \circledfill{2} We establish early detection guarantees, showing that under positive evidence drift the posterior crosses the intervention threshold in bounded expected time before substantial exfiltration, with behavioral modeling yielding 1.7--4.5 days earlier detection than rational baselines (Theorem~\ref{thm:early-detection}). \circledfill{3} We show that median-based order-statistic aggregation keeps the implemented action within the honest recommendation range whenever fewer than half of certifiers are compromised, preserving both equilibrium enforcement and early detection guarantees (Proposition~\ref{prop:median-robust}). Together, these results yield an auditable\footnote{BTCE is designed as an auditable coordination layer on top of systems enterprises already operate, rather than a replacement for SIEM, EDR, DLP, identity, HR, or human-review workflows.}, drift-robust mechanism that combines heterogeneous certifiers and supports escalation before meaningful data loss even when evidence is sparse.

\section{Related Work}
\label{app:related-work}

Game-theoretic security has largely emphasized Stackelberg~\cite{sinha2018stackelberg} or Bayesian~\cite{BergemannMorris2016} models in which defenders commit and attackers best-respond, while repeated games with private monitoring~\cite{Kandori2002} motivate history-dependent coordination but do not combine causal signal dependencies, behavioral bias, and loyalty-type evolution. BTCE instead uses correlated equilibrium~\cite{Aumann1987} to coordinate heterogeneous certifiers without commitment power, while incorporating present bias~\cite{Laibson1997Hyperbolic,ODonoghueRabin1999}, loss aversion~\cite{kahneman1979prospect,TverskyKahneman1992}, and insider psychology~\cite{Greitzer2018}. ML approaches ranging from Isolation Forest~\cite{gavai2015detecting} and LSTMs~\cite{tuor2017deep} to Transformer-UBS~\cite{elbasheer2024enhancing} often achieve user-level detection but struggle on timing-sensitive early-warning metrics; we use a DBN-inspired evidence accumulation layer~\cite{wall2021bayesian} inside the game-theoretic mechanism, and Byzantine-robust aggregation~\cite{LamportByzantine1982} motivates the median order-statistic rule in Algorithm~\ref{alg:intervention} and Proposition~\ref{prop:median-robust}. Provenance systems such as HOLMES~\cite{milajerdi2019holmes} and KAIROS~\cite{cheng2024kairos} reason over causal host-level telemetry, whereas BTCE targets enterprise control planes (cloud and data access) where coordinated misuse can unfold across users and devices without a single host-level execution chain; we therefore benchmark against a streaming approximation, HOLMESLite, rather than treating provenance reconstruction as the main task. Finally, BTCE operates under sparsity comparable to credit card fraud detection ($\sim 0.17\%$ prevalence~\cite{chawla2002smote,trisanto2020effectiveness}) but avoids class imbalance entirely by combining equilibrium analysis with controlled evidence accumulation rather than reducing features or rebalancing the training distribution.

Section~\ref{sec:btce-derivations} introduces the signal model and game;
Section~\ref{sec:behavioral-prospect} develops the behavioral utilities, type
evolution, and belief updates; Section~\ref{sec:btce} defines the BTCE
equilibrium; Section~\ref{sec:early-detection} analyzes early detection,
graduated intervention, and Byzantine robustness; and
Section~\ref{sec:empirical} presents empirical results.


\section{Bayesian Game Design}
\label{sec:btce-derivations}

\paragraph{Model overview (informal).}
We first give an informal story of the model before the formal specification. Each period, a user $u$ takes an unobserved operational action $x_u^t$ (e.g., normal work, negligent behavior, or malicious steps). A committee of certifiers $i\in\mathcal{C}$ maintain beliefs about the user’s latent \emph{loyalty} state $\theta_u^t$
and latent \emph{behavioral} state $b_u^t$.
The platform observes only telemetry-derived signals $s_u^t$ (\S\ref{sec:signal-generation-dbn}) and the committee’s
\emph{implemented} intervention outcome $d_u^{t-1}$ (e.g., approve, monitor, escalate, suspend).
 Based on the public history $H_u^t$ and their (possibly noisy) local posteriors,
certifiers recommend interventions $c_{i,u}^t\in A_c$.
The platform aggregates recommendations into a single implemented action $d_u^t$ via a rule $g$.
Users then respond to the intervention environment and their own present bias (\S\ref{sec:type-evolution}). Next, we present a definition of the game, observations, strategies and a sample per-period, per-user sequence of play. 

\begin{definition}[Behavioral temporal Bayesian game]
\label{def:markov-behavioral}
A behavioral temporal Bayesian game is a tuple
\[
\mathcal{G}
=
\Bigl(
\mathcal{T},\mathcal{C},\mathcal{U},\Theta_{\mathrm{user}},B,\mu^0,\nu^0,
A_u,A_c,g_\kappa,
\mathcal{S},G_{\mathrm{sig}},P_s,\tau_\Theta,\tau_B,
(U_u^t)_{t\in\mathcal{T}},(U_{c,i}^t)_{t\in\mathcal{T},i\in\mathcal{C}}
\Bigr).
\]
Here $\mathcal{T}=\{0,\ldots,T\}$ is the finite horizon, $\mathcal{C}=\{1,\ldots,m\}$ is the certifier committee, and $\mathcal{U}$ is the user population. The loyalty type space is
\[
\Theta_{\mathrm{user}}=\{\texttt{Loyal},\texttt{Negligent},\texttt{Disgruntled},\texttt{Malicious}\},
\]
with $\theta_u^t\in\Theta_{\mathrm{user}}$ and common prior $\mu^0\in\Delta(\Theta_{\mathrm{user}})$; the finite behavioral state space is $B$, with $b_u^t\in B$ and prior $\nu^0\in\Delta(B)$. Under common priors, each certifier initializes $\mu_{i,u}^0=\mu^0$ and $\nu_{i,u}^0=\nu^0$. Users choose $x_u^t\in A_u$, certifiers recommend $c_{i,u}^t\in A_c$, and $A_c$ is totally ordered by stringency $\preceq$. Writing $\mathbf c_u^t=(c_{i,u}^t)_{i\in\mathcal C}\in(A_c)^m$, the platform implements
$d_u^t=g_{\kappa_u^t}(\mathbf c_u^t)\in A_c,
\kappa_u^t\in\{1,\ldots,m\}$, where $g_\kappa$ is the order-statistic aggregation rule of \S\ref{sec:robust-aggregation}. The public signal space is $\mathcal S=\prod_{j\in\mathcal I}\mathcal S_j=[0,1]^K,\qquad K=|\mathcal I|$,
and $s_u^t=(s_{u,j}^t)_{j\in\mathcal I}\in\mathcal S$ is the period-$t$ telemetry vector. The graph $G_{\mathrm{sig}}=(\mathcal I,E)$ is a directed acyclic graph over signal components, and the signal kernel $P_s(\cdot\mid\theta,b,x,d)$ admits a density that factorizes as $p_s(s_u^t\mid \theta_u^t,b_u^t,x_u^t,d_u^{t-1})
=
\prod_{j\in\mathcal I}
p_{s,j}\!\left(
s_{u,j}^t
\,\middle|\,
s_{u,\mathrm{Parents}(j)}^t,\theta_u^t,b_u^t,x_u^t,d_u^{t-1}
\right)$.

Latent states evolve according to the loyalty type kernel
$\tau_\Theta(\theta_u^{t+1}\mid\theta_u^t,x_u^t,d_u^t)\in\Delta(\Theta_{\mathrm{user}})$, and the behavioral type kernel $\tau_B(b_u^{t+1}\mid b_u^t,\theta_u^t,x_u^t,d_u^t)\in\Delta(B)$. Finally, $(U_u^t)_t$ and $(U_{c,i}^t)_{t,i}$ are user and certifier stage utilities. 
\end{definition}
User utility is the behavioral utility $U_{\mathrm{beh}}$ of Def.~\ref{def:beh-utility}, based on intrinsic payoff $\rho:A_u\to\mathbb R$ and Prospect-theoretic value $v(\cdot)$. Certifier $i$ has per-user utility
$U_c^t(\theta_u^t,b_u^t,c_{i,u}^t,s_u^t,\mu_{i,u}^t)$,
and platform-level utility $U_{c,i}^t =
\sum_{u\in\mathcal U}
U_c^t(\theta_u^t,$ $b_u^t,c_{i,u}^t,s_u^t,\mu_{i,u}^t)$,
where $\mu_{i,u}^t\in\Delta(\Theta_{\mathrm{user}})$ is certifier $i$'s belief over user $u$'s loyalty type. Since the horizon is finite, utilities are undiscounted except for the behavioral present-bias term; equivalently, the baseline discount factor is $\delta=1$, with the extension to $\delta\in(0,1]$ immediate.

Certifiers observe the public history of past signals and implemented committee outcomes
$\{(s_u^\tau,d_u^\tau)\}_{\tau < t}$.
User actions $x_u^t$ are not directly observable; they influence play only through the realized signals. Only the
implemented outcome $d_u^t$ is publicly observed, the individual certifier recommendations $(c_{i,u}^t)_i$ are not. A (possibly mixed) user strategy for user $u$ is a sequence
$\pi_u \coloneqq (\pi_u^t)_{t\in\mathcal{T}}$ where, for each $t$, $\pi_u^t:\ \Theta_{\text{user}}\times B \times \mathcal{H}_u^t \to \Delta(A_u)$,
and $\mathcal{H}_u^t$ denotes the set of feasible public histories $H_u^t$. Because certifiers observe the period-$t$ public signal before acting, a (possibly mixed) certifier strategy for certifier $i$
is a sequence $\sigma_i \coloneqq (\sigma_i^t)_{t\in\mathcal{T}}$ where, for each $t$, $\sigma_i^t:\ \mathcal{H}_u^t \times \mathcal{S} \to \Delta(A_c)$,
so that $\sigma_i^t(H_u^t,s_u^t)$ is the distribution over certifier actions after observing $(H_u^t,s_u^t)$. In the pure-strategy case, $\sigma_i^t(H_u^t,s_u^t)$ is a point mass on an element of $A_c$. $\pi \coloneqq (\pi_u)_{u\in\mathcal{U}}$ and $\sigma \coloneqq (\sigma_i)_{i\in\mathcal{C}}$ denote the strategy profiles. Gameplay (per period $t$, per user $u\in\mathcal{U}$) proceeds as follows: the user moves first, the signal realizes, certifiers recommend after observing $s_u^t$, and transitions occur under $d_u^t$. Only $d_u^t$ enters $P_s$, $\tau_\Theta$, and $\tau_B$; the recommendation profile $\mathbf{c}_u^t$ matters only through the aggregation that produces it. Full timing appears in Table~\ref{tab:update-journey}. 
\begin{table}[t]
\caption{Within-period update journey and time indexing.}
\label{tab:update-journey}
\centering
\footnotesize
\setlength{\tabcolsep}{4pt}
\renewcommand{\arraystretch}{1.18}
\begin{tabular}{|c|p{2.3cm}|p{7.3cm}|p{1.2cm}|}
\hline
\textbf{\#} & \textbf{Actor} & \textbf{Update (inputs $\to$ output)} & \textbf{Visible} \\
\hline

0 &
Public history \newline (All parties) &
Start of period $t$ history: $H_u^t=\bigl((s_u^0,d_u^0),\ldots,(s_u^{t-1},d_u^{t-1})\bigr)$; default $d_u^{-1}=\textsc{NoAct}$ &
$H_u^t$ \\
\hline

1 &
User move \newline (User $u$) &
Given latent state $(\theta_u^t,b_u^t)$ and history $H_u^t$, user samples $x_u^t \sim \pi_u^t(\cdot \mid \theta_u^t,b_u^t,H_u^t)$ &
no \\
\hline

2 &
Signal emission \newline (Telemetry) &
$s_u^t \sim P_s(\cdot\mid\theta_u^t,b_u^t,x_u^t,d_u^{t-1})$; DBN factorization:
$p_s(s_u^t\mid\cdot)=\prod_{j\in\mathcal{I}} p_{s,j}(s_{u,j}^t\mid s_{u,\mathrm{Pa}(j)}^t,\theta,b,x,d_u^{t-1})$ &
$s_u^t$ \\
\hline

3 &
Belief filtering \newline (Common posterior) &
Compute action-marginalized likelihood
$L_u^t(\theta,b)=\sum_{x\in A_u}\pi_u^t(x\mid \theta,b,H_u^t)\,p_s(s_u^t\mid \theta,b,x,d_u^{t-1})$, then update belief
$\widetilde{\beta}_u^t(\theta,b)\propto L_u^t(\theta,b)\,\beta_u^t(\theta,b)$. &
\ $\mu_u^{t,\text{sys}}$ \\
\hline

4 &
ITS gate \newline (Platform) &
Compute scalar risk score,  $\mathrm{ITS}(u,t)=\textstyle\sum_{j=1}^k w_j^{\mathrm{ITS}}\,s_{u,j}^t$ &
\ $\mathrm{ITS}$ \\
\hline

5 &
Certifier recs \newline (Certifiers $i\!\in\!\mathcal{C}$) &
Each certifier outputs recommendation $c_{i,u}^t\in A_c$ &
\ $c_{i,u}^t$ \\
\hline

6 &
Aggregation \newline (Committee) &
Aggregate and implement $d_u^t\coloneqq g(\mathbf{c}_u^t)\in A_c$ &
$\boldsymbol{d_u^t}$ \\
\hline

7 &
Belief propagation \newline (Filter) &
Propagate filtered belief $\widetilde{\beta}_u^t$ through $(\tau_\Theta,\tau_B)$ to obtain $\beta_u^{t+1}$, the prior for period $t\!+\!1$ &
\ $\beta_u^{t+1}$ \\
\hline

8 &
Type evolution \newline (User dynamics) &
Transition to next latent state: $\theta_u^{t+1}\!\sim\!\tau_\Theta(\cdot\mid\theta_u^t,x_u^t,d_u^t)$;\enspace
$b_u^{t+1}\!\sim\!\tau_B(\cdot\mid b_u^t,\theta_u^t,x_u^t,d_u^t)$ &
no \\
\hline
\end{tabular}
\end{table}
We model the telemetry process with a two-slice DBN: 
\begin{definition}[Exploit-chain dependency graph] \label{def:exploit-chain}
\label{sec:signal-generation-dbn}
We model public telemetry $s_u^t\in[0,1]^K$ with a controlled DBN, conditional on $(\theta_u^t,b_u^t$, $x_u^t,d_u^t)$, over the channel ontology used in the CERT r6.2~\cite{CERT2016} dataset, $\mathcal I=\{\textsc{Logon},\textsc{LDAP},\textsc{Device}$, $\textsc{File}, \textsc{Email},\textsc{http}\}$.
Conditional on $(\theta_u^t,b_u^t,x_u^t,d_u^{t-1})$, channels factorize along a stylized kill-chain DAG
$G_{\mathrm{sig}}=(\mathcal I,E_{\mathrm{sig}})$ with \[
E_{\mathrm{sig}}\supseteq
\{(\textsc{Logon},\textsc{File}),(\textsc{Logon},\textsc{Device}),
(\textsc{File},\textsc{Email}), (\textsc{File},\textsc{http})\}.\]
\end{definition}
This yields standard hidden-state filtering updates for the joint posterior $\beta_u^t(\theta,b)$ (Proposition~\ref{prop:belief-update}),
which feeds into (i) certifier recommendation policies and (ii) the platform’s temporal correlating device
that generates recommendations over time (Def.~\ref{def:temporal-device}).
The equilibrium concept (BTCE, Def.~\ref{def:btce}) enforces temporal obedience: given their information,
users and certifiers do not profitably deviate from recommendations. Thus file/device anomalies are interpreted in logon context, and exfiltration channels are interpreted in the context of prior file aggregation. Each channel uses a Beta likelihood whose shape parameters depend on type, behavioral state, and control, allowing the filter to assign higher likelihood to coordinated access--collection--exfil sequences than to isolated negligent anomalies.
\section{Behavioral Utility, Type Evolution and Belief System}
\label{sec:behavioral-prospect}

This section develops the behavioral machinery the rest of the paper depends on.
The introduction argued that real insiders depart from the rational-actor
benchmark through present bias and loss aversion; we now make those forces
precise and tie them to our guarantees. We specify (i)~certifier payoffs that
calibrated sanctions will later target (Def.~\ref{def:certifier-payoff}),
(ii)~a prospect-theoretic, quasi-hyperbolic user utility encoding loss-domain
risk-seeking and time inconsistency (Defs.~\ref{def:beh-utility}--\ref{def:qhd}),
and (iii)~the type- and behavioral-state kernels through which adverse events
move a user toward the loss domain and impulsive escalation
(\S\ref{sec:type-evolution}--\S\ref{sec:beliefs}). Together these supply the current-self objective whose obedience Theorem~\ref{thm:obedience} certifies and
the evidence process whose drift Theorem~\ref{thm:early-detection} bounds. Throughout we
use the within-period timing of Def.~\ref{def:markov-behavioral}: the user action $x^t_u$
precedes the signal $s^t_u \sim P_s(\cdot\mid \theta^t_u,b^t_u,x^t_u,d^{t-1}_u)$,
certifiers act after observing $s^t_u$, and payoffs are evaluated using the
implemented outcome $d^t_u$. 

Fix a baseline intervention $\bar a\in A_c$ and let
$\mathbb{I}_{\mathrm{int}}(d)\coloneqq\mathbb{I}\{\bar a\preceq d\}$
(triggering when $d$ is restrictive enough); write
$\Theta_{\mathrm{benign}}\coloneqq\Theta_{\mathrm{user}}\setminus\{\texttt{Malicious}\}$.
Define detection events using $d_u^t$:
$\mathbb{I}_{\mathrm{det}}(u,t)\coloneqq \mathbb{I}\{\theta_u^t=\texttt{Malicious}\}\,\mathbb{I}_{\mathrm{int}}(d_u^t)$,
$\mathbb{I}_{\mathrm{miss}}(u,t)\coloneqq \mathbb{I}\{\theta_u^t=\texttt{Malicious}\}\,(1-\mathbb{I}_{\mathrm{int}}(d_u^t))$, $\mathbb{I}_{\mathrm{fp}}(u,t)\coloneqq \mathbb{I}\{\theta_u^t\in\Theta_{\mathrm{benign}}\}\,\mathbb{I}_{\mathrm{int}}(d_u^t)$. Let $R_{\mathrm{det}}>0$ be the reward for correctly intervening on a malicious user,
$C_{\mathrm{miss}}>0$ the cost of a miss, and $C_{\mathrm{fp}}>0$ the cost of a false positive. 

\begin{definition}[Certifier stage payoff]
\label{def:certifier-payoff}
The certifier's per-user stage payoff is $\bar U_{c}^{t}\!\bigl(\theta_u^t,b_u^t,c_{i,u}^t,s_u^t,d_u^t\bigr)
\coloneqq
R_{\mathrm{det}}\mathbb{I}_{\mathrm{det}}(u,t)
-
C_{\mathrm{miss}}\mathbb{I}_{\mathrm{miss}}(u,t)
-
C_{\mathrm{fp}}\mathbb{I}_{\mathrm{fp}}(u,t)$. The platform-level certifier stage utility is the sum across users,
$U_{c,i}^{t}\;\coloneqq\;\sum_{u\in\mathcal{U}}$ $\bar U_{c}^{t}\!\bigl(\theta_u^t,b_u^t,c_{i,u}^t,s_u^t,d_u^t\bigr)$.
\end{definition}

\begin{definition}[Behavioral user stage utility]
\label{def:beh-utility}
For $\alpha\in(0,1]$, $\lambda>1$, and reference point $r$, define
$v(z)=z^\alpha$ for $z\ge0$ and $v(z)=-\lambda(-z)^\alpha$ for $z<0$.
Let $\rho:A_u\to\mathbb R$ be intrinsic action payoff, $C>0$ the penalty from detection, and let
$b=(r(b),\psi_{\mathrm{pb}}(b))$ with $r(b)\in\mathbb R_+$ and $\psi_{\mathrm{pb}}(b)\in(0,1]$.
Given $p_{\mathrm{det},u}^t
\coloneqq
p_{\mathrm{det}}(x_u^t,d_u^t,s_u^t,\mu_u^{t,\mathrm{sys}})\in[0,1]$,
the user stage utility is \[U_{\mathrm{beh}}
=
(1-p_{\mathrm{det},u}^t)v(\rho(x_u^t)-r(b_u^t))
+
p_{\mathrm{det},u}^t v(-C-r(b_u^t)).\] We set $U_u^t\equiv U_{\mathrm{beh}}$ in Def.~\ref{def:markov-behavioral}.
\end{definition}

\begin{definition}[Quasi-hyperbolic present-biased evaluation]
\label{def:qhd}
At time $t$, a user with $\psi_{\mathrm{pb}}\in(0,1]$ evaluates a continuation stream
$\{U^\tau\}_{\tau=t}^T$ as
$U^t+\psi_{\mathrm{pb}}\sum_{\tau=t+1}^T U^\tau$. When $\psi_{\mathrm{pb}}=1$, preferences are time-consistent and reduce, under $\delta=1$, to the undiscounted continuation sum.
\end{definition}

Thus $U_{\mathrm{beh}}$ captures reference dependence and loss-domain risk taking, while the quasi-hyperbolic criterion makes the time-$t$ self overweight the current period relative to future sanctions. The continuation and current-self value functionals used for temporal obedience are introduced in \S\ref{sec:temporal-obedience}.

\subsection{Type Evolution}
\label{sec:type-evolution}
Loyalty types evolve through a multinomial-logistic kernel $\tau_\Theta$ in which toxic access drives \texttt{Disgruntled}$\to$\texttt{Malicious}, probing pushes \texttt{Loyal}$\to$\texttt{Disgruntled}, and honest behavior stabilizes the current type, while reverse transitions to \texttt{Loyal} are rare. The behavioral state $(r_u^t,\psi_{\mathrm{pb},u}^t)$ evolves through $\tau_B$: the reference point rises under sustained adverse interventions (deepening the loss domain), and present bias sharpens under stress. Discretization is presented in Appendix~\ref{app:type-evolution}. While loyalty types evolve in response to strategic behavior and organizational feedback,  behavioral states capture psychological dynamics driven by organizational experience. We encode loss-domain effects, stress, and time inconsistency
by letting the behavioral state $b_u^t\in B$ contain at least a reference point and a present-bias parameter. Let $B \subseteq \mathbb{R}\times(0,1]$ and write
$b_u^t \equiv (r_u^t,\ \psi_{\mathrm{pb},u}^t)$,
where $r_u^t$ is the reference point and $\psi_{\mathrm{pb},u}^t$ is the quasi-hyperbolic present-bias parameter
(Def.~\ref{def:beh-utility}).

\subsection{Bayesian Belief System}
\label{sec:beliefs}
At each period $t$, public monitoring reveals the realized signal vector $s_u^t$ and the implemented outcome
$d_u^t$, so all certifiers condition on the same public history $H_u^t$.
Given the controlled signal model $P_s(\cdot\mid \theta,b,x,d)$ and the state-transition kernels, the platform
runs a controlled Bayesian filter to form a canonical posterior over the latent user state
$(\theta_u^t,b_u^t)$.
We write this joint belief as $\beta_u^t\in\Delta(\Theta_{\text{user}}\times B)$, and we use its marginal on
$\Theta_{\text{user}}$ as the system-level loyalty belief
$\mu_{u}^{t,\mathrm{sys}}\in\Delta(\Theta_{\text{user}})$ that drives belief-threshold policies and the BTCE
obedience constraints.

\paragraph{Controlled DBN filter.}
\label{par:emission-likelihood}
Let $d_u^{-1}=\textsc{NoAct}$ and let
$\pi_u^t(\cdot\mid\theta,b,H_u^t)\in\Delta(A_u)$ be the user's possibly mixed policy. Since $x_u^t$ is unobserved, define the action-marginalized likelihood
$L_u^t(\theta,b)
\coloneqq
\sum_{x\in A_u}
\pi_u^t(x\mid\theta,b,H_u^t)\,
p_s(s_u^t\mid\theta,b,x,d_u^{t-1})$,
with $p_s$ factorized as in Def.~\ref{def:markov-behavioral}. Observing $s_u^t$ also updates the posterior over the latent action:\[\omega_u^t(x\mid\theta,b)
\coloneqq
\Pr(x_u^t=x\mid\theta,b,H_u^{t+1})
= 
\frac{
p_s(s_u^t\mid\theta,b,x,d_u^{t-1})\,
\pi_u^t(x\mid\theta,b,H_u^t)
}{
L_u^t(\theta,b)
},\] for $L_u^t(\theta,b)>0$; if the signal is uninformative about $x$, then $\omega_u^t=\pi_u^t$.

\begin{proposition}[DBN-filtered Bayesian belief update]
\label{prop:belief-update}
Fix $u\in\mathcal U$ and $t\in\mathcal T$.
The next-step belief $\beta_u^{t+1}\in\Delta(\Theta_{\text{user}}\times B)$ satisfies
$\beta_u^{t+1}(\theta',b')
=$ 
$\sum_{\theta\in\Theta_{\text{user}}}\sum_{b\in B}$ 
$\widetilde{\beta}_u^t(\theta,b)\; 
\sum_{x\in A_u}\pi_u^t(x\mid \theta,b,H_u^t)\;$  
$\tau_{\Theta}(\theta'\mid \theta,x,d_u^t)\;
\tau_{B}(b'\mid b,\theta,x,d_u^t)$,
 where $\widetilde{\beta}_u^t$ is the filtered belief after observing $s_u^t$, given by:
\[
\widetilde{\beta}_u^t(\theta,b)
=
\frac{L_u^t(\theta,b)\,\beta_u^t(\theta,b)}
{\sum_{\theta''\in\Theta_{\text{user}}}\sum_{b''\in B}
L_u^t(\theta'',b'')\,\beta_u^t(\theta'',b'')},
\]
where $L_u^t(\theta,b)$ is the action-marginalized emission likelihood (\S\ref{par:emission-likelihood}). 
The per-user per-period update costs
$O(|\Theta_{\mathrm{user}}||B||A_u|K)$, where $K=|\mathcal I|=6$. With
$|\Theta_{\mathrm{user}}|=4$ and $|B|\approx10^2$, this scales to
$N\approx10^4$ users on commodity hardware, and the DBN-likelihood computation parallelizes across users.
\end{proposition}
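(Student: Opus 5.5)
The proof is a standard two-stage hidden-Markov filtering argument: a Bayes correction step followed by a prediction step through the transition kernels. We take as the inductive hypothesis that $\beta_u^t(\theta,b)=\Pr(\theta_u^t=\theta,b_u^t=b\mid H_u^t)$, with $H_u^0$ empty and $\beta_u^0=\mu^0\otimes\nu^0$ under the common-prior assumption of Def.~\ref{def:markov-behavioral}. The goal is to show that $\beta_u^{t+1}$ as displayed equals $\Pr(\theta_u^{t+1}=\theta',b_u^{t+1}=b'\mid H_u^{t+1})$, where $H_u^{t+1}=(H_u^t,s_u^t,d_u^t)$.

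\textbf{Correction step.} We work with the joint law of $(\theta_u^t,b_u^t,x_u^t,s_u^t)$ given $H_u^t$. By the timing in Table~\ref{tab:update-journey}, this law has density
\[
\beta_u^t(\theta,b)\,\pi_u^t(x\mid\theta,b,H_u^t)\,p_s(s\mid\theta,b,x,d_u^{t-1}),
\]
with $d_u^{t-1}$ measurable with respect to $H_u^t$. Summing over $x$ gives $L_u^t(\theta,b)\beta_u^t(\theta,b)$, and Bayes' rule then yields the stated normalized expression for $\widetilde\beta_u^t$. The DAG factorization of Def.~\ref{def:exploit-chain} enters only through the evaluation of $p_s$ as a product of the $K$ channel terms. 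The same computation shows that the conditional law of $x_u^t$ given $(\theta,b)$ and $s_u^t$ is $\omega_u^t$.

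\textbf{Conditioning on $d_u^t$.} This is the step needing most care. The outcome $d_u^t=g_\kappa(\mathbf c_u^t)$ is produced by certifier strategies $\sigma_i^t(H_u^t,s_u^t)$, which depend only on public information and not on the latent $(\theta,b,x)$. Hence conditioning additionally on $d_u^t$ leaves the posterior over $(\theta,b,x)$ unchanged. The independent randomization in mixed strategies and in the correlating device factors out, provided the device's private recommendations are themselves generated from public histories.

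\textbf{Prediction step.} Apply the law of total probability over $(\theta,b,x)$, using the conditional independence of the two kernels given $(\theta,b,x,d_u^t)$ (Table~\ref{tab:update-journey}, step 8). The correct weight on $x$ is the posterior $\omega_u^t(x\mid\theta,b)$. The statement instead writes the prior $\pi_u^t$, so the argument must either interpret the displayed formula with $\pi_u^t$ replaced by $\omega_u^t$ or invoke the noted case where the signal is uninformative about $x$. Identifying the right form and reconciling it with the statement is the main obstacle. The natural route is the identity $\widetilde\beta_u^t(\theta,b)\,\omega_u^t(x\mid\theta,b)\propto\beta_u^t\,\pi_u^t\,p_s$, which makes the two forms agree exactly when $p_s$ does not vary with $x$.

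\textbf{Complexity.} Each of the $|\Theta_{\mathrm{user}}||B||A_u|$ terms in $L_u^t$ requires $K$ channel-density evaluations. Normalization is linear in the state size. The prediction step over sparse kernels is dominated by the same count, which gives $O(|\Theta_{\mathrm{user}}||B||A_u|K)$. Independence across users then gives parallelizability.
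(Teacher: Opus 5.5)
The paper states this proposition without proof, calling it the standard hidden-state filtering update. Your correction--prediction derivation is exactly that standard argument, and it is correct. The discrepancy you flagged is real, and you should state it as a conclusion rather than leave it as an ``obstacle''. Conditioning on $H_u^{t+1}$ gives
\[
\beta_u^{t+1}(\theta',b')=\sum_{\theta,b}\widetilde\beta_u^t(\theta,b)\sum_{x\in A_u}\omega_u^t(x\mid\theta,b)\,\tau_\Theta(\theta'\mid\theta,x,d_u^t)\,\tau_B(b'\mid b,\theta,x,d_u^t),
\]
so the displayed formula with $\pi_u^t$ is false in general. A minimal counterexample uses the following choices:
\begin{itemize}
\item $B=\{b_0\}$, and $\beta_u^t$ is a point mass on \texttt{Disgruntled};
\item $A_u=\{x_h,x_m\}$ with $\pi_u^t=(\tfrac12,\tfrac12)$, and, suppressing fixed arguments, $p_s(\cdot\mid x_m)\neq p_s(\cdot\mid x_h)$;
\item $\tau_\Theta(\texttt{Malicious}\mid\texttt{Disgruntled},x,d)$ equals $1$ for $x=x_m$ and $0$ for $x=x_h$.
\end{itemize}
The true posterior mass on \texttt{Malicious} is $p_s(s_u^t\mid x_m)/\bigl(p_s(s_u^t\mid x_m)+p_s(s_u^t\mid x_h)\bigr)$. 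The displayed formula instead returns $\tfrac12$ for every signal, discarding exactly the action evidence that should move mass toward \texttt{Malicious}.

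So what you can prove is the $\omega_u^t$-version, plus the stated version under an added hypothesis. Your ``exactly when'' should be sharpened. For $(\theta,b)$ in the support of $\beta_u^t$, the weights $\widetilde\beta_u^t\,\omega_u^t$ and $\widetilde\beta_u^t\,\pi_u^t$ coincide if and only if $p_s(s_u^t\mid\theta,b,\cdot,d_u^{t-1})$ is constant on $\operatorname{supp}\pi_u^t(\cdot\mid\theta,b,H_u^t)$. This automatically covers pure policies. The resulting $\beta_u^{t+1}$ also agrees whenever the transition kernels do not depend on $x$.

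Two further points need tightening.

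\textbf{Conditioning on $d_u^t$.} The hypothesis this step actually uses is $d_u^t\perp(\theta_u^t,b_u^t,x_u^t)\mid(H_u^t,s_u^t)$. It holds for independently randomized behavioral strategies $\sigma_i^t(\cdot\mid H_u^t,s_u^t)$, which is the setting the proposition's user policy $\pi_u^t$ refers to. Your proviso about the device does not secure it. In Def.~\ref{def:temporal-device} the map $\zeta_{u,c}^t$ does take only public arguments, but it is drawn jointly with $\zeta_{u,x}^t$. A device that uses one coin both to recommend $x_m$ to the user and to recommend \textsf{Escalate} to every certifier therefore makes $d_u^t=g_\kappa(\mathbf c_u^t)$ reveal $x_u^t$. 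Byzantine certifiers whose reports depend on latent information have the same effect. In such cases the update must condition on $d_u^t$ as well.

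\textbf{Cost.} Your ``sparse kernels'' caveat is essential, not cosmetic. With dense kernels the prediction step costs $O(|\Theta_{\mathrm{user}}|^2|B|^2|A_u|)$, which exceeds the stated bound at $|B|\approx10^2$. The needed sparsity comes from Appendix~\ref{app:type-evolution}, where rounding leaves $\tau_B(\cdot\mid b,\theta,x,d)$ with at most four support points, together with $|\Theta_{\mathrm{user}}|=4$ being a constant. Switching to $\omega_u^t$ costs nothing extra, since its numerators are already computed for $L_u^t$. The claim about $N\approx10^4$ users on commodity hardware is an engineering assertion that no argument of this kind establishes.
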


\section{Bayesian--Temporal Correlated Equilibrium}
\label{sec:btce}
With beliefs and behavioral utilities fixed, we now define the solution concept
that coordinates certifiers and users over time. We develop an equilibrium notion
that coordinates heterogeneous certifiers without assuming any of them can
commit (unlike Stackelberg models) and that can be verified locally, at each
on-path information state, rather than over entire strategies. Bayesian--Temporal
Correlated Equilibrium (BTCE, Def.~\ref{def:btce}) provides a disciplined coordination model for this setting, where a sealed-envelope device issues private recommendations that depend on the realized history. Obedience is checked at each on-path information state, and the current recommendation must dominate any one-period deviation followed by a return to the protocol. The two results of this section connect this local check to the rest of the mechanism: Lemma~\ref{lem:oneshot} records the immediate no-one-shot-deviation implication of BTCE obedience, and Theorem~\ref{thm:obedience} gives conditions under which calibrated sanctions make the recommended action a present-biased user's current-self best response. Following Aumann~\cite{Aumann1987} and Bergemann--Morris~\cite{BergemannMorris2016}, we coordinate participants through a \emph{temporal correlating device} issuing private recommendations over time, via the extensive-form ``sealed-envelope'' construction~\cite{von2008extensive}. The correlating device
samples an entire contingent recommendation plan ex ante and reveals only the currently relevant recommendation
when an agent reaches the corresponding information set.
In our setting, the period-$t$ public signal $s_u^t$ is realized after the user acts and before certifiers act, so
recommendations are revealed sequentially within each period (user first, then certifiers conditional on $s_u^t$),
while remaining jointly coupled through the single ex-ante draw.

\begin{definition}[Temporal correlating device (sealed-envelope form)]
\label{def:temporal-device}
Fix a user $u$ and horizon $T$.
A \emph{temporal correlating device} is a probability distribution $\zeta_u$ over
recommendation plans
$\zeta_u \;=\;\Bigl(\,(\zeta_{u,x}^t)_{t=0}^T,\ (\zeta_{u,c}^t)_{t=0}^T\,\Bigr)$,
where for each $t$,
$\zeta_{u,x}^t:\Theta_{\mathrm{user}}\times B\times \mathcal H_u^t \to A_u,
\zeta_{u,c}^t:\mathcal H_u^t\times \mathcal S \to (A_c)^m$
are measurable maps.
At time $0$, the device samples a plan $\zeta_u$ and keeps it fixed.
In period $t$, after the public history $H_u^t$ is realized, the device privately recommends to the user
$\widehat x_u^t \;=\; \zeta_{u,x}^t(\theta_u^t,b_u^t,H_u^t)$.
After the public signal $s_u^t$ is realized and observed, the device privately recommends to the certifiers
$\widehat{\mathbf c}_u^t \;=\; \zeta_{u,c}^t(H_u^t,s_u^t)$.
\end{definition}
The recommendations induce an ``extended'' game where players additionally consider them when choosing actions. For a formal treatment, see Definition~2.2 of~\cite{von2008extensive}. Note that conditioning on the sampled plan $\zeta_u$, recommendations are deterministic. Integrating out $\zeta_u$ induces the history-dependent recommendation kernels
$\zeta_{u,x}^t(\cdot\mid \theta,b,H)$ and $\zeta_{u,c}^t(\cdot\mid H,s)$ as the corresponding
pushforward distributions of $\zeta_{u,x}^t(\theta,b,H)$ and $\zeta_{u,c}^t(H,s)$. If we fix a sealed-envelope temporal correlating device $\zeta_u$ and platform rule
$d_u^t=g_{\kappa_u^t}(\mathbf c_u^t)$, $\Phi_u^{\zeta_u}$ denotes the distribution over complete play paths (induced path measure). 
Expectations in Def.~\ref{def:btce} are taken with respect to $\Phi_u^{\zeta_u}$. The equilibrium concept treats the device as committing ex ante to a recommendation plan and revealing only the relevant recommendation as play unfolds, requiring obedience at every on-path history. 

\begin{definition}[Bayesian-temporal correlated equilibrium (BTCE)]
\label{def:btce}
Fix a user $u\in\mathcal U$ and a temporal correlating device $\zeta$ (Def.~\ref{def:temporal-device}).
Let $\Phi_u^{\zeta}$ denote the induced path measure under obedient play. Let beliefs be given by the regular conditional distributions induced by $\Phi_u^{\zeta}$ on the
extended on-path information available to each player (i.e., public history together with the realized
recommendations they have observed), using Bayes' rule wherever the relevant conditioning event has positive probability. For the user, define the recommendation history up to (but not including) $t$ as
$\widehat x_u^{0:t-1} \;\coloneqq\; (\widehat x_u^0,\widehat x_u^1,\ldots,\widehat x_u^{t-1}) \in (A_u)^t$,
and the corresponding augmented user information state as
$\widehat I_{u,t} \;\coloneqq\; (H_u^t,\widehat x_u^{0:t-1})$.
For certifier $i$, define
$\widehat c_{i,u}^{0:t-1} \;\coloneqq\; (\widehat c_{i,u}^0,\widehat c_{i,u}^1,\ldots,\widehat c_{i,u}^{t-1}) \in (A_c)^t,
\widehat I_{i,u,t} \;\coloneqq\; (H_u^t,s_u^t,\widehat c_{i,u}^{0:t-1})$.
For an information state and recommended action, define the continuation payoff under obedience as the
conditional expected total payoff from period $t$ onward (with the player obeying from $t$ onward):
\begin{align*}
\mathfrak{U}_u(t,\theta,b,H,\widehat x^{0:t-1};\hat x)
\;\coloneqq\;
\mathbb E_{\Phi_u^{\zeta}}\!\left[\,U_u \,\middle|\,
\begin{aligned}
&\theta_u^t=\theta,\ b_u^t=b,\ H_u^t=H,\\
&\widehat x_u^{0:t-1}=\widehat x^{0:t-1},\ \widehat x_u^t=\hat x
\end{aligned}
\right].
\end{align*}

Similarly, for certifier $i$,
\begin{align*}
\mathfrak{U}_i(t,H,s,\widehat c_i^{0:t-1};\hat c_i)
\;\coloneqq\;
\mathbb E_{\Phi_u^{\zeta}}\!\left[\,U_i \,\middle|\,
\begin{aligned}
&H_u^t=H,\ s_u^t=s,\\
&\widehat c_{i,u}^{0:t-1}=\widehat c_i^{0:t-1},\ \widehat c_{i,u}^t=\hat c_i
\end{aligned}
\right].
\end{align*}

Similarly, let $\mathfrak{U}_u(t,\theta,b,H,\widehat x^{0:t-1}; x')$ (resp.\
$\mathfrak{U}_i(t,H,s,\widehat c_i^{0:t-1}; c_i')$) denote the same conditional
expectation when the player deviates \emph{only} at time $t$ to $x'\in A_u$
(resp.\ $c_i'\in A_c$) and obeys thereafter.
We say $\zeta$ is a \emph{BTCE} if the following obedience constraints hold at every on-path information state:

(i) User obedience.\label{sec:user-obed}
For every $t\in\{0,\dots,T\}$, every $(\theta,b,H,\widehat x^{0:t-1},\hat x)$ such that
$\Phi_u^{\zeta}\!\Bigl(
\theta_u^t=\theta,\ b_u^t=b,\ H_u^t=H,\ 
\widehat x_u^{0:t-1}=\widehat x^{0:t-1},\ \widehat x_u^t=\hat x
\Bigr)>0$,
and every $x'\in A_u$,
$\mathfrak{U}_u(t,\theta,b,H,\widehat x^{0:t-1};\hat x)
\;\ge\;
\mathfrak{U}_u(t,\theta,b,H,\widehat x^{0:t-1};x')$.

(ii) Certifier obedience.
\label{sec:cert-obed}
For every certifier $i\in\mathcal C$, every $t\in\{0,\dots,T\}$, every $(H,s,\widehat c_i^{0:t-1},\hat c_i)$ such that
$\Phi_u^{\zeta}\!\Bigl(
H_u^t=H,\ s_u^t=s,\ 
\widehat c_{i,u}^{0:t-1}=\widehat c_i^{0:t-1},\ \widehat c_{i,u}^t=\hat c_i
\Bigr)>0$,
and every $c_i'\in A_c$,
$\mathfrak{U}_i(t,H,s,\widehat c_i^{0:t-1};\hat c_i)
\;\ge\;
\mathfrak{U}_i(t,H,s,\widehat c_i^{0:t-1};c_i')$.
\end{definition}

The sealed-envelope construction lets recommendations be correlated across agents and time while preserving the within-period information order. The user acts, the public signal is realized, and then certifiers act conditional on $(H_u^t,s_u^t)$. Under the finite-horizon and bounded-utility assumptions used below, this construction induces a well-defined path measure $\Phi_u^{\zeta}$.

\subsection{Temporal Obedience and Current-Self Best Response}
\label{sec:temporal-obedience}
We formalize a continuation value and a present-biased current-self objective. Throughout, fix a user $u$, a user policy $\pi_u=(\pi_u^t)_{t\in\mathcal{T}}$, and a certifier profile $\sigma$, which we suppress from the value functions. Write the payoff-relevant \emph{information state} $\mathcal{K}_u^t\coloneqq(\theta_u^t,b_u^t,H_u^t)$ (generic value $\mathcal{K}=(\theta,b,H)$), abbreviate the public belief $\mu_u^{t,\mathrm{sys}}\coloneqq\mu^{\mathrm{sys}}(H_u^t)$, and let $H_u^{t+1}=(H_u^t,(s_u^t,d_u^t))$ append the realized pair.

\begin{definition}[Continuation and current-self values]
\label{def:gbbs}
With $V_u^{T+1}\equiv 0$, the continuation value is
\[
V_u^{t}(I)\coloneqq
\mathbb{E}\Bigl[\,U_{\mathrm{beh}}\bigl(x_u^t,d_u^t,s_u^t,\mu_u^{t,\mathrm{sys}},b\bigr)
+V_u^{t+1}\bigl(\mathcal{K}_u^{t+1}\bigr)\ \Big|\ \mathcal{K}_u^t=\mathcal{K}\Bigr],
\]
where the period-$t$ variables follow the on-path kernels: $x_u^t\sim\pi_u^t(\cdot\mid \mathcal{K})$, $s_u^t\sim P_s(\cdot\mid\theta,b,x_u^t,d_u^{t-1})$, $c_{i,u}^t\sim\sigma_i^t(\cdot\mid H,s_u^t)$ with $d_u^t=g_{\kappa_u^t}(\mathbf{c}_u^t)$, and $\theta_u^{t+1}\sim\tau_\Theta(\cdot\mid\theta,x_u^t,d_u^t)$, $b_u^{t+1}\sim\tau_B(\cdot\mid b,\theta,x_u^t,d_u^t)$. The present-biased current-self objective is $V_u^t$ with the current action fixed to $x$ and the continuation weighted by $\psi_{\mathrm{pb}}(b)$:
\[
W_u^{t}(\mathcal{K};x)\coloneqq
\mathbb{E}\Bigl[\,U_{\mathrm{beh}}\bigl(x,d_u^t,s_u^t,\mu_u^{t,\mathrm{sys}},b\bigr)
+\psi_{\mathrm{pb}}(b)\,V_u^{t+1}\bigl(\mathcal{K}_u^{t+1}\bigr)\ \Big|\ \mathcal{K}_u^t=\mathcal{K},\ x_u^t=x\Bigr].
\]
A present-biased best response is any $\pi_u^\star$ such that, with $V$ evaluated under $\pi_u^\star$, $\;\mathrm{supp}\,\pi_u^{\star,t}(\cdot\mid \mathcal{K})\subseteq\arg\max_{x\in A_u}W_u^{t}(\mathcal{K};x)$ for all $(t,\mathcal{K})$.
\end{definition}
Users evaluate continuation streams using $U_{\mathrm{beh}}$ (Def.~\ref{def:beh-utility}) and the present-biased objective $W$ (Def.~\ref{def:gbbs}). Honest certifiers use the per-user stage payoffs in Def.~\ref{def:certifier-payoff} and evaluate continuation streams by the induced finite-horizon sum of stage payoffs. Accordingly, the obedience inequalities in Def.~\ref{def:btce} hold at every on-path information state. This yields the structural implication below.

\begin{lemma}[No profitable on-path one-shot deviation]
\label{lem:oneshot}
If $\zeta$ is a BTCE, then at every $\Phi_u^{\zeta}$-on-path information state no user or certifier has a profitable one-shot deviation from the revealed recommendation.
\end{lemma}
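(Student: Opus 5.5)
The lemma is essentially a restatement of the obedience constraints in Def.~\ref{def:btce}, so the plan is to unfold the definitions and match a ``profitable one-shot deviation'' to a violated obedience inequality. First, we make precise what a one-shot deviation means at an on-path information state. For the user, this is a state $\widehat I_{u,t}=(H_u^t,\widehat x_u^{0:t-1})$ together with the latent pair $(\theta,b)$ and revealed recommendation $\hat x$, reached with positive $\Phi_u^{\zeta}$-probability. At such a state, the user plays some $x'\in A_u$ at period $t$ and obeys the device at all later periods. For certifier $i$, it is a state $(H,s,\widehat c_i^{0:t-1})$ with revealed $\hat c_i$ and positive probability, where the certifier plays $c_i'$ at $t$ and obeys thereafter. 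Such a deviation is \emph{profitable} if its conditional expected payoff strictly exceeds that of obeying. That conditional expected payoff is exactly $\mathfrak U_u(t,\theta,b,H,\widehat x^{0:t-1};x')$, respectively $\mathfrak U_i(t,H,s,\widehat c_i^{0:t-1};c_i')$, as defined in Def.~\ref{def:btce}.

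Next, we argue by contradiction. Suppose a profitable one-shot deviation exists at some on-path state. Then
\[
\mathfrak U_u(t,\theta,b,H,\widehat x^{0:t-1};x')>\mathfrak U_u(t,\theta,b,H,\widehat x^{0:t-1};\hat x)
\]
for the user, or the analogous strict inequality holds for a certifier. The conditioning event has positive probability, so this state falls within the quantifier range of constraint (i), respectively (ii). That constraint asserts the reverse weak inequality, which gives the contradiction. Since $t$, the state, and the deviation were arbitrary, no player has a profitable one-shot deviation at any on-path information state.

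The only step that needs care, and the one I expect to be the main obstacle, is checking that the deviation payoff is well defined and coincides with the object in the definition. Two points must be verified. First, the conditional expectations exist: this uses the finite horizon and bounded stage utilities, together with the regular conditional distributions of $\Phi_u^{\zeta}$ on positive-probability events. Second, a single-period deviation changes only the period-$t$ action. Later recommendations are read off the fixed sealed-envelope plan, so they remain well defined even when the deviation alters $s_u^t$ or $d_u^t$ and later histories go off the original path. Hence ``obey thereafter'' induces a legitimate path measure.

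For the present-biased user, there is one further issue: whether ``profitable'' should be measured by the undiscounted $\mathfrak U_u$ or by the current-self objective $W_u^t$ of Def.~\ref{def:gbbs}. I would state the lemma with respect to the payoff used in the obedience constraints, namely $\mathfrak U_u$. I would then remark that whenever the obedience constraints are imposed with the continuation weighted by $\psi_{\mathrm{pb}}(b)$, the same argument goes through verbatim. The identification with $W_u^t$ is deferred to Theorem~\ref{thm:obedience}.
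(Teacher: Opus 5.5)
Your proposal is correct and follows the paper's proof. The paper likewise observes that the claim is immediate from the user- and certifier-obedience constraints of Definition~\ref{def:btce}, since those constraints compare obeying the revealed recommendation with every one-shot deviation at each positive-probability information state. Your additional remarks on the well-definedness of the deviation payoffs and on $\mathfrak U_u$ versus $W_u^t$ are sound, but the paper's proof does not need them.
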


\begin{proof}
Immediate from the user- and certifier-obedience constraints in Definition~\ref{def:btce}, which compare obeying the revealed recommendation with any one-shot deviation $x'\in A_u$ or $c'_i\in A_c$ at each on-path information state.
\end{proof}

\begin{assumption}[Risk--reward alignment and bounded continuation]
\label{ass:alignment}
Fix an on-path user information state $(t,\theta,b,H,\hat x^{0:t-1},\hat x)$ under
a BTCE, with on-path detection risk $\bar p_t(\cdot)$ and intrinsic stage payoff
$\rho(\cdot)$, so that the expected stage utility at sanction level $C\ge 0$ is
$\rho(x)-C\,\bar p_t(x)$ (Def.~\ref{def:gbbs}), and write
$V^{t+1}(x):=\mathbb{E}\!\left[V_u^{t+1}(\mathcal{K}_u^{t+1})\mid \mathcal{K},\,x_u^t=x\right]$
for the expected continuation it induces (this already integrates the detection
outcome $d_u^t$). Assume the aggressiveness order $\preceq$ is total on $A_u$,
so every feasible $x\neq\hat x$ satisfies $x\succ\hat x$ or $x\prec\hat x$.
\begin{description}
\item[(A) Risk--reward alignment.] There exist payoff-gap constants
$\Delta\rho^+,\Delta\rho^-\ge 0$ and risk-change constants
$\Delta q_L^+,\Delta q_U^->0$ such that
\begin{itemize}
  \item every more-aggressive deviation $x^+\succ\hat x$ satisfies
        $\rho(x^+)-\rho(\hat x)\le \Delta\rho^+$ and
        $\bar p_t(x^+)-\bar p_t(\hat x)\ge \Delta q_L^+$;
  \item every less-aggressive deviation $x^-\prec\hat x$ satisfies
        $\rho(\hat x)-\rho(x^-)\ge \Delta\rho^-$ and
        $\bar p_t(\hat x)-\bar p_t(x^-)\le \Delta q_U^-$.
\end{itemize}
\item[(B) Bounded continuation effect.] There exists $\Delta V<\infty$ bounding the
\emph{present-bias-weighted} continuation effect of the current action:
$\psi_{\mathrm{pb}}(b)\,\bigl|V^{t+1}(x)-V^{t+1}(\hat x)\bigr|\le \Delta V$ for every
feasible $x$ at this information state.
\item[(C) Separation.] $
\frac{\Delta\rho^+ + \Delta V}{\Delta q_L^+}\;\le\;
\frac{\Delta\rho^- - \Delta V}{\Delta q_U^-}$.
\end{description}
\end{assumption}

\begin{theorem}[Sufficient condition for current-self obedience]
\label{thm:obedience}
Suppose (A)--(C) hold. Then the interval
$[C_{\min},C_{\max}]
:=\Bigl[\tfrac{\Delta\rho^+ + \Delta V}{\Delta q_L^+},\,
        \tfrac{\Delta\rho^- - \Delta V}{\Delta q_U^-}\Bigr]$
is nonempty, and for every sanction level $C\in[C_{\min},C_{\max}]$ the
recommended action $\hat x$ is a current-self best response (objective $W$,
Def.~\ref{def:gbbs}) at this information state. Equivalently, (C) reads
$\frac{\Delta\rho^-}{\Delta q_U^-}-\frac{\Delta\rho^+}{\Delta q_L^+}
\;\ge\;\Delta V\!\left(\frac{1}{\Delta q_L^+}+\frac{1}{\Delta q_U^-}\right)$,
and in particular requires $\Delta\rho^-\ge\Delta V$ (strictly whenever
$\Delta\rho^++\Delta V>0$): the reward-per-unit-risk forfeited by caution must
exceed that gained by aggression by a margin of at least
$\Delta V\bigl(\tfrac{1}{\Delta q_L^+}+\tfrac{1}{\Delta q_U^-}\bigr)$, covering the
continuation slack in both directions. Without (C), only one-sided obedience is
guaranteed.
\end{theorem}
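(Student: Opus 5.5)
The plan is to reduce current-self obedience to a pair of linear inequalities in the sanction level $C$, one per deviation direction. By Assumption~\ref{ass:alignment}, the expected stage utility at sanction $C$ is $\rho(x)-C\,\bar p_t(x)$. So for any feasible $x$ the current-self objective of Def.~\ref{def:gbbs} is
\[
W(x)=\rho(x)-C\,\bar p_t(x)+\psi_{\mathrm{pb}}(b)\,V^{t+1}(x).
\]
The recommended action $\hat x$ is a best response iff $W(\hat x)-W(x)\ge 0$ for every $x\neq\hat x$. Because $\preceq$ is total on $A_u$, every such $x$ is either a more-aggressive deviation $x^+\succ\hat x$ or a less-aggressive one $x^-\prec\hat x$. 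I will treat the two cases separately and then intersect the resulting ranges of $C$.

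\textbf{More-aggressive deviations.} Here
\[
W(\hat x)-W(x^+)=-\bigl(\rho(x^+)-\rho(\hat x)\bigr)+C\bigl(\bar p_t(x^+)-\bar p_t(\hat x)\bigr)-\psi_{\mathrm{pb}}(b)\bigl(V^{t+1}(x^+)-V^{t+1}(\hat x)\bigr).
\]
Using (A) and (B), this is at least $-\Delta\rho^+ + C\,\Delta q_L^+ - \Delta V$, since $C\ge0$ and the risk increment is at least $\Delta q_L^+$. This lower bound is nonnegative exactly when $C\ge C_{\min}$.

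\textbf{Less-aggressive deviations.} Symmetrically,
\[
W(\hat x)-W(x^-)\ge \Delta\rho^- - C\,\Delta q_U^- - \Delta V,
\]
because the risk reduction is at most $\Delta q_U^-$ and $C\ge 0$. This is nonnegative exactly when $C\le C_{\max}$.

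Together, any $C\in[C_{\min},C_{\max}]$ makes $\hat x$ weakly dominate every alternative, so $\hat x\in\arg\max_x W(x)$. Nonemptiness of the interval is literally (C). If only one direction of deviation exists, only the corresponding bound is needed; this is the "one-sided obedience" remark.

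\textbf{Equivalent form of (C) and its consequences.} Expanding
\[
\frac{\Delta\rho^++\Delta V}{\Delta q_L^+}\le \frac{\Delta\rho^--\Delta V}{\Delta q_U^-}
\]
and moving the $\Delta V$ terms to one side gives the displayed rearrangement directly. For $\Delta\rho^-\ge\Delta V$, note that the left side of (C) is $\ge 0$, so the right side must also be $\ge 0$. Since $\Delta q_U^->0$, this forces $\Delta\rho^-\ge\Delta V$. If $\Delta\rho^++\Delta V>0$, the left side is strictly positive, which gives strict inequality.

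\textbf{Main obstacle.} The delicate step is justifying the stage-utility form $\rho(x)-C\bar p_t(x)$ as the objective entering $W$. The actual $U_{\mathrm{beh}}$ is prospect-theoretic and nonlinear in $C$. I will rely on Assumption~\ref{ass:alignment} stipulating this reduced form. I also need the continuation term to be controlled uniformly by (B), even though $V^{t+1}$ itself depends on $C$ through future sanctions. Because (B) is stated per information state and already bounds the $\psi_{\mathrm{pb}}$-weighted difference, I will take $\Delta V$ to hold for the sanction levels under consideration and flag this as the modeling assumption carrying the weight.
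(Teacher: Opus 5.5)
Your proposal is correct and follows essentially the same route as the paper: bound the current-self deviation gain for more- and less-aggressive deviations using (A) and (B), which yields $C\ge C_{\min}$ and $C\le C_{\max}$ respectively, then use totality of $\preceq$ and the fact that (C) is exactly $C_{\min}\le C_{\max}$. The one small difference is how $C\ge 0$ is justified: the paper notes $C_{\min}\ge 0$, so every $C$ in the interval is nonnegative, while you take $C\ge 0$ from the assumption's stipulation of a nonnegative sanction level; either suffices.
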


\begin{remark}
The constants $\Delta\rho^\pm,\Delta q_L^+,\Delta q_U^-,\Delta V$ are evaluated
at a fixed on-path BTCE continuation. Thus $[C_{\min},C_{\max}]$ is a local
obedience interval. Interpreting it as a design range requires the same
continuation bound $\Delta V$ to hold uniformly for all sanction levels in that
range.
\end{remark}

\begin{proof}
Write the current-self deviation gain as
$\Delta W(x)=\bigl[\rho(x)-\rho(\hat x)\bigr]
-C\bigl[\bar p_t(x)-\bar p_t(\hat x)\bigr]
+\psi_{\mathrm{pb}}(b)\bigl[V^{t+1}(x)-V^{t+1}(\hat x)\bigr]$,
whose last term is bounded by $\Delta V$ in absolute value by (B). Since
$\Delta\rho^+,\Delta V\ge0$ and $\Delta q_L^+>0$ we have $C_{\min}\ge0$; hence any
$C\ge C_{\min}$ is nonnegative, and multiplying the risk inequalities of (A) by
$C$ preserves their direction. For a more-aggressive $x^+\succ\hat x$, (A) gives
reward gain $\le\Delta\rho^+$ and risk increase $\ge\Delta q_L^+$, so
$\Delta W(x^+)\le\Delta\rho^+-C\Delta q_L^++\Delta V$; this upper bound is
nonpositive iff $C\ge C_{\min}$, hence $\Delta W(x^+)\le0$ for every such $C$.
For a less-aggressive $x^-\prec\hat x$, (A) gives reward forfeit $\ge\Delta\rho^-$
and risk reduction $\le\Delta q_U^-$, so
$\Delta W(x^-)\le-\Delta\rho^-+C\Delta q_U^-+\Delta V$; this upper bound is
nonpositive iff $C\le C_{\max}$, hence $\Delta W(x^-)\le0$ for every such $C$.
Condition~(C) is exactly $C_{\min}\le C_{\max}$, so $[C_{\min},C_{\max}]$ is
nonempty; fix any $C$ in it. By totality of $\preceq$ on $A_u$, every feasible
$x\neq\hat x$ is some $x^+\succ\hat x$ or some $x^-\prec\hat x$, so the two cases
exhaust all deviations and $\Delta W(x)\le0$ for all $x\in A_u$. Hence
$\hat x\in\arg\max_{x\in A_u}W(\mathcal{K};x)$, i.e.\ a current-self best response.
\end{proof}

When malicious users deviate, evidence accumulates to trigger intervention in finite expected time before exfiltration. We prove this under a positive-drift assumption on the per-period log-likelihood ratio (Theorem~\ref{thm:early-detection}).

\section{Early Detection Mechanism}
\label{sec:early-detection}

We now develop an operational early-warning policy from the BTCE specification. In practice, a platform observing sparse, noisy telemetry must decide when to intervene before exfiltration is complete. Concretely, we (i) map the system belief, or its calibrated score proxy $\mathrm{ITS}(u,t)$ (Def.~\ref{def:ITS}), to an operational decision rule; (ii) specify how a committee of certifiers aggregates recommendations into a single intervention $d_u^t$; and (iii) define the resulting intervention dynamics over time. The platform maintains a joint posterior $\beta_u^t(\theta,b)\in\Delta(\Theta_{\text{user}}\times B)$ updated by the DBN filter of Proposition~\ref{prop:belief-update}. This belief drives both calibrated sanctions for equilibrium enforcement and evidence accumulation for detection.

\begin{definition}[Insider Threat Score (ITS)]
\label{def:ITS}
The Insider Threat Score at time $t$ for user $u$ is a weighted linear combination of signal components: $\mathrm{ITS}(u,t)
=
(\mathbf{w}^{\mathrm{ITS}})^\top s_u^t
=
\sum_{j=1}^k w_j^{\mathrm{ITS}}\, s_{u,j}^t$, where $\mathbf{w}^{\mathrm{ITS}}=(w_1^{\mathrm{ITS}},\ldots,w_k^{\mathrm{ITS}})\in\mathbb{R}^k$ is a weight vector calibrated from historical telemetry (e.g., breach labels) to maximize discrimination between malicious and benign behavior patterns. Formally, we do not assume an exact identity between $\mu_{u}^{t,\mathrm{sys}}$ and $\mathrm{ITS}$; rather,
calibration is used operationally to select thresholds that correspond to desired posterior risk levels. 
\end{definition}
Note that the calibration map $f$ can be estimated via logistic regression on labeled telemetry.
Let $y_u^t\in\{0,1\}$ indicate whether user $u$ is malicious at time $t$.
Using features $s_u^t$, a fitted model takes the form $\log\frac{\Pr(y_u^t=1\mid s_u^t)}{\Pr(y_u^t=0\mid s_u^t)}
=
\alpha_0 + \alpha_1\,(\mathbf{w}^{\mathrm{ITS}})^\top s_u^t
=
\alpha_0 + \alpha_1\,\mathrm{ITS}(u,t)$.
Thus we may set
$\mu_{u}^{t,\mathrm{sys}}(\texttt{Malicious})
\approx
\operatorname{logistic}\!\bigl(\alpha_0+\alpha_1\,\mathrm{ITS}(u,t)\bigr)$. So thresholding on $\mathrm{ITS}(u,t)$ is approximately equivalent to thresholding on the system belief
$\mu_{u}^{t,\mathrm{sys}}(\texttt{Malicious})$.

\subsection{Tiered intervention and evidence drift}
\label{sec:robust-aggregation}
Fix thresholds $-\infty<\eta_{\mathrm{low}}<\eta_{\mathrm{high}}<\infty$ that partition $\mathrm{ITS}(u,t)$ into low-, mid-, and high-risk tiers. Let $(A_c,\preceq)$ be the totally ordered intervention space, where $a\preceq a'$ means that $a'$ is weakly more restrictive, and take
$\textsf{NoAct}\ \prec\ \textsf{Monitor}\ \prec\ \textsf{Escalate}\ \prec\ \textsf{SuspendAccess}$, with $\textsf{Monitor}$ denoting passive low-friction logging or heightened observation without access restriction. Given committee recommendations $\mathbf c_u^t=(c_{1,u}^t,\ldots,c_{m,u}^t)$ $\in(A_c)^m$, write
$c_u^{t,(1)}\preceq\cdots\preceq c_u^{t,(m)}$ for the order statistics and define
$g_\kappa(\mathbf c_u^t)\coloneqq c_u^{t,(\kappa)}$, $\kappa\in\{1,\ldots,m\}$. Algorithm~\ref{alg:intervention} maps filtered belief and signal evidence to an implemented action: low-risk users receive $\textsf{NoAct}$, isolated mid-risk anomalies receive $\textsf{Monitor}$, coordinated mid-risk cases use median aggregation $\kappa=\lceil m/2\rceil$, and high-risk cases use the fail-safe max rule $\kappa=m$. The DBN posterior gives the dominant type estimate $\hat\theta_u^t=\arg\max_{\theta\in\Theta_{\mathrm{user}}}\mu_u^{t,\mathrm{sys}}(\theta)$, while the kill-chain flag $\rho_u^t$ records whether the current signal pattern contains coordinated exploit-chain evidence.

\begin{algorithm}[h]
\caption{Temporal Intervention Protocol}
\label{alg:intervention}
\DontPrintSemicolon
\SetKwInOut{Input}{Input}
\SetKwInOut{Output}{Output}

\Input{$\mathrm{ITS}(u,t)$, belief $\mu_u^{t,\mathrm{sys}}\in\Delta(\Theta_{\mathrm{user}})$, kill-chain flag $\rho_u^t\in\{0,1\}$\; certifier committee size $m$, recommendations $\mathbf{c}_u^t\in(A_c)^m$}
\Output{Implemented outcome $d_u^t\in A_c$, aggregation index\; $\kappa_u^t$, type estimate $\hat\theta_u^t$}
$\kappa_u^t \leftarrow \lceil m/2\rceil$, $\hat\theta_u^t \leftarrow \arg\max_{\theta\in\Theta_{\mathrm{user}}}\,\mu_u^{t,\mathrm{sys}}(\theta)$\tcp{\scriptsize dominant type from DBN posterior}

\tcp{--- Type-conditioned intervention ---}

\uIf{$\mathrm{ITS}(u,t) < \eta_{\mathrm{low}}$}{
  $d_u^t \leftarrow \textsf{NoAct}$\;
  \tcp{\textsc{Loyal}: routine ops, all channels stable, $\theta_u^t$ held at \texttt{Loyal}}
  \tcp{\textsc{Negligent}: isolated minor anomaly; no kill-chain follow-on observed}
}
\uElseIf{$\mathrm{ITS}(u,t) < \eta_{\mathrm{high}}$
         \textbf{and} $\rho_u^t = 0$
         \textbf{and} $\hat\theta_u^t \neq \texttt{Malicious}$}{
  $d_u^t \leftarrow \textsf{Monitor}$\;
  \tcp{\textsc{Negligent}: uncoordinated spike, no cross-channel progression}
}
\uElseIf{$\mathrm{ITS}(u,t) < \eta_{\mathrm{high}}$}{
  $d_u^t \leftarrow g_{\kappa_u^t}(\mathbf{c}_u^t)$\tcp*[f]{median aggregation in mid-risk tier}
  
  \tcp{\textsc{Disgruntled}: correlated x-channel escalation, $r_u^t\!\uparrow$, loss-domain entry;}
  \tcp{\textsc{Malicious} pre-threshold ($\rho_u^t{=}1$,
       $\mathrm{ITS} < \eta_{\mathrm{high}}$): evidence building}
}
\Else{
  $\kappa_u^t \leftarrow m$\;
  $d_u^t \leftarrow g_{\kappa_u^t}(\mathbf{c}_u^t)$ \tcp*[f]{\textsc{Malicious}: high-risk intervention       (\textsc{Logon}$\!\to\!$\textsc{File}$\!\to\!$\textsc{Exfil})}
}
\Return $(d_u^t,\,\kappa_u^t,\,\hat\theta_u^t)$\;
\end{algorithm}

\paragraph{Evidence drift and escalation time.} 
\label{sec:early-detection-llr} 
Here we formalize the operational intuition that interventions shape future telemetry, telemetry updates the malicious posterior, and escalation occurs once the posterior crosses a chosen threshold. Even when the available evidence is sparse, as long as the signals generated under truly malicious behavior contribute evidence at a reliably positive rate, the expected time until the system’s belief crosses \(\mu_{\mathrm{thr}}\) is bounded. This implies escalation will occur in finite expected time and potentially before data loss. 
We use posterior log-odds to study escalation timing ($M\coloneqq \texttt{Malicious}$): $\Lambda_u^t\coloneqq\operatorname{logit}\bigl(\mu_u^{t,\mathrm{sys}}(M)\bigr)$. The predictive log-likelihood ratio increment $\ell_u^t$ satisfies the Bayes recursion $\Lambda_u^{t+1}=\Lambda_u^t+\ell_u^t$, where the previously implemented control $d_u^{t-1}$ shapes the period-$t$ signal model and hence the malicious drift $\mathbb E[\ell_u^t\mid H_u^t,\theta_u^t=M]$. Fix $\mu_{\mathrm{thr}}\in(0,1)$, define $t_{\mathrm{intervene}}\coloneqq \inf\{t\ge t_0:\mu_u^{t,\mathrm{sys}}(M)\ge\mu_{\mathrm{thr}}\}$, \[\Delta t^\star \coloneqq \frac{\operatorname{logit}(\mu_{\mathrm{thr}})-\operatorname{logit}(\mu_u^{t_0,\mathrm{sys}}(M))} {\lambda_{\min}}.\]

\begin{theorem}[Early detection under controlled evidence accumulation]
\label{thm:early-detection}
Suppose $\Lambda_u^{t_0}<\operatorname{logit}(\mu_{\mathrm{thr}})$ and there exist
$\lambda_{\min}>0$ and $\ell_{\max}<\infty$ such that, before intervention,
$\mathbb{E}[\ell_u^t\mid H_u^t,\theta_u^t=M]\ge\lambda_{\min}$ and
$|\ell_u^t|\le \ell_{\max}$ for all $t\ge t_0$. Then
$\mathbb{E}[t_{\mathrm{intervene}}-t_0\mid H_u^{t_0},\theta_u^{t_0}=M]
\le
\Delta t^\star+\frac{\ell_{\max}}{\lambda_{\min}}$.
\end{theorem}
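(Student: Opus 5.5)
The plan is the standard optional-stopping (Wald-type) argument for a random walk with positive drift and bounded increments.

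Set $\Lambda^\star\coloneqq\operatorname{logit}(\mu_{\mathrm{thr}})$ and $\tau\coloneqq t_{\mathrm{intervene}}$. Because logit is strictly increasing, $\tau$ is the first $t\ge t_0$ with $\Lambda_u^t\ge\Lambda^\star$. It is a stopping time for the filtration $\mathcal F_t\coloneqq\sigma(H_u^t,\theta_u^t)$, working throughout conditionally on $H_u^{t_0}$ and $\theta_u^{t_0}=M$. Define the compensated process
\[
Z_t\coloneqq \Lambda_u^t-\Lambda_u^{t_0}-\sum_{s=t_0}^{t-1}\mathbb E[\ell_u^s\mid \mathcal F_s].
\]
It is a martingale with increments bounded by $2\ell_{\max}$. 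For $s<\tau$ (before intervention) the drift hypothesis gives $\mathbb E[\ell_u^s\mid\mathcal F_s]\ge\lambda_{\min}$. Hence on $\{t\le\tau\}$ we have $\Lambda_u^t-\Lambda_u^{t_0}\ge Z_t+\lambda_{\min}(t-t_0)$.

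The first step is to apply optional stopping at the bounded time $\tau\wedge n$:
\[
\mathbb E[\Lambda_u^{\tau\wedge n}]-\Lambda_u^{t_0}\ \ge\ \lambda_{\min}\,\mathbb E[\tau\wedge n-t_0].
\]
The second step is the overshoot bound. Before $\tau$ the process sits strictly below $\Lambda^\star$. At $\tau$ it exceeds the previous value by at most one increment. So $\Lambda_u^{\tau\wedge n}<\Lambda^\star+\ell_{\max}$ always. Combining the two gives
\[
\mathbb E[\tau\wedge n-t_0]\le \frac{\Lambda^\star-\Lambda_u^{t_0}}{\lambda_{\min}}+\frac{\ell_{\max}}{\lambda_{\min}}=\Delta t^\star+\frac{\ell_{\max}}{\lambda_{\min}},
\]
since $\Lambda_u^{t_0}=\operatorname{logit}(\mu_u^{t_0,\mathrm{sys}}(M))$. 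Letting $n\to\infty$ and using monotone convergence yields the claimed bound. As a byproduct, $\tau<\infty$ almost surely.

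The main obstacle is making the conditioning rigorous. The drift hypothesis conditions on $\theta_u^t=M$ at each $t$, but the type can evolve under $\tau_\Theta$. I would handle this by reading the hypothesis as holding on the event that the user remains malicious until intervention, the regime the theorem concerns, so that it applies along the path. If the type can leave $M$, I would either treat \texttt{Malicious} as absorbing before intervention (reverse transitions are described as rare) or restrict the expectation to that event.

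A second point concerns the finite horizon $\mathcal T$. For the limit $n\to\infty$ I would extend the increment process beyond $T$ with the same drift and bound, which is harmless for the inequality. Alternatively, I would state the bound for the process run until $\tau$.
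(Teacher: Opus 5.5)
Your proposal is correct and follows the paper's proof step for step: stop at $\tau\wedge n$, sum the drift inequality (which you justify cleanly via the compensated martingale $Z_t$), bound the overshoot by one increment $\ell_{\max}$, and let $n\to\infty$ by monotone convergence. Your remarks on type evolution and the finite horizon cover points the paper's sketch leaves implicit; of your two fixes, prefer treating \texttt{Malicious} as absorbing before intervention, because otherwise $\sigma(H_u^t,\theta_u^t)$ need not be nested in $t$, and conditioning on the future event of remaining malicious could change the drift.
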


\begin{proof}[sketch]
Let $a=\operatorname{logit}(\mu_{\mathrm{thr}})$ and
$\gamma=t_{\mathrm{intervene}}$. For $\tau_n=\gamma\wedge n$, summing the stopped
drift inequality gives
$\lambda_{\min}\mathbb{E}[\tau_n-t_0]
\le
\mathbb{E}[\Lambda_u^{\tau_n}]-\Lambda_u^{t_0}$.
Before crossing, $\Lambda_u^t<a$; at the crossing, bounded increments imply
$\Lambda_u^\gamma\le a+\ell_{\max}$. Hence
$\mathbb{E}[\Lambda_u^{\tau_n}]\le a+\ell_{\max}$, so
$\mathbb{E}[\tau_n-t_0]
\le
\frac{a-\Lambda_u^{t_0}+\ell_{\max}}{\lambda_{\min}}
=
\Delta t^\star+\frac{\ell_{\max}}{\lambda_{\min}}$.
Letting $n\to\infty$ and using monotone convergence gives the claim.
\end{proof}

\begin{remark}[From obedience to exclusion: the post-intervention continuation game]
\label{rem:post-intervention-game}
If a malicious user deviates and the evidence crosses the intervention region, the platform implements \textsc{SuspendAccess}: user $u$ is inactive from period $t{+}1$ onward with continuation value normalized to zero, and the continuation game is the BTCE environment restricted to the active set $\mathcal U_{\mathrm{act}}^{t+1}\coloneqq\{v\in\mathcal U:d_v^t\neq\textsc{SuspendAccess}\}$, a reduced game on the remaining population.
\end{remark}

\paragraph{Interpretation and design takeaway.}
The theorem converts positive malicious evidence drift into a timeline guarantee: if the controlled LLR has average increment at least $\lambda_{\min}>0$, then the posterior reaches the intervention threshold in bounded expected time. In BTCE, correlated access--collection--exfil signals make malicious trajectories systematically more likely under the attack model than the benign model, so behavioral escalation appears empirically as earlier threshold crossing relative to the rational ablation.

\paragraph{Byzantine-Resilient Aggregation}

Let $\mathcal C=\{1,\dots,m\}$ index certifiers and let $\mathcal C_{\mathrm{byz}}\subseteq\mathcal C$
be an unknown set of Byzantine certifiers; define $\mathcal C_{\mathrm{hon}}=\mathcal C\setminus\mathcal C_{\mathrm{byz}}$.
Assume fewer than half are Byzantine: $|\mathcal C_{\mathrm{byz}}|<m/2$.

\begin{proposition}[Robustness of the median; fail-safe of the max]
\label{prop:median-robust}
Fix any profile of recommendations $\mathbf c_u^t\in (A_c)^m$ and write its order statistics
$c_u^{t,(1)}\preceq\cdots\preceq c_u^{t,(m)}$. If $\kappa=\lceil m/2\rceil$, then the implemented action $d_u^t=g_{\kappa}(\mathbf c_u^t)$ lies within the range of honest recommendations: $\min\nolimits_{\preceq}\{c_{i,u}^t:i\in\mathcal C_{\mathrm{hon}}\}
\ \preceq\ 
d_u^t
\ \preceq\
\max\nolimits_{\preceq}\{c_{i,u}^t:i\in\mathcal C_{\mathrm{hon}}\}$. Further, if $\kappa=m$, then $d_u^t=g_m(\mathbf c_u^t)=\max_{\preceq}\{c_{i,u}^t:i\in\mathcal C\}$.
\end{proposition}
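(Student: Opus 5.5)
The plan is a pure counting argument on the order statistics of the totally ordered set $(A_c,\preceq)$. Write $h=|\mathcal C_{\mathrm{hon}}|$ and $f=|\mathcal C_{\mathrm{byz}}|$, so that $f<m/2$ and $h=m-f>m/2$. Set $\kappa=\lceil m/2\rceil$ and let $d=c_u^{t,(\kappa)}$. Write $\ell=\min_\preceq\{c_{i,u}^t:i\in\mathcal C_{\mathrm{hon}}\}$ and $r=\max_\preceq\{c_{i,u}^t:i\in\mathcal C_{\mathrm{hon}}\}$; both exist because $\mathcal C_{\mathrm{hon}}\neq\emptyset$ and $A_c$ is finite and totally ordered. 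Since ties are possible, I will work with multiset positions, using a fixed sorting permutation so that "the $k$-th order statistic" means a specific index.

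\textbf{Lower bound $\ell\preceq d$.} Suppose for contradiction that $d\prec\ell$. Every honest recommendation satisfies $c_{i,u}^t\succeq\ell\succ d$. Each of the positions $1,\dots,\kappa$ in sorted order holds a value $\preceq d$, hence $\prec\ell$, so none of them can be occupied by an honest certifier. That forces $\kappa\le f$. But $f<m/2\le\lceil m/2\rceil=\kappa$, a contradiction.

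\textbf{Upper bound $d\preceq r$.} Suppose $r\prec d$. Every position $\kappa,\dots,m$ holds a value $\succeq d\succ r$, so all $m-\kappa+1$ of these positions are Byzantine, giving $m-\kappa+1\le f$. The key inequality to check is $m-\lceil m/2\rceil+1>f$. For $m$ even, the left side is $m/2+1$ and $f\le m/2-1$. For $m$ odd, the left side is $(m+1)/2$ and $f\le (m-1)/2$. In both cases we get a contradiction.

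\textbf{Second claim.} The claim for $\kappa=m$ is immediate from the definition of order statistics: $c_u^{t,(m)}$ is the $\preceq$-maximum of the whole profile.

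The only delicate step is handling ties and the parity of $m$ in the upper-bound count. The case split above covers both, and ties are harmless because the contradiction arguments use strict inequalities against $\ell$ and $r$.
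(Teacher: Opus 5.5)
Your counting argument is correct and complete. The two contradictions reduce to $\kappa=\lceil m/2\rceil>f$ and $m-\kappa+1\ge (m+1)/2>f$, and fixing a sorting permutation handles ties cleanly. One small point: the honest minimum and maximum exist simply because $\mathcal C_{\mathrm{hon}}\neq\emptyset$, which follows from $f<m/2$; finiteness of $A_c$ is not needed. The paper states this proposition without a written proof, and the order-statistic counting you give is the standard argument the claim implicitly relies on.
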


\paragraph{Anchor-clip ablation.}
The median rule is the headline mechanism certified by Proposition~\ref{prop:median-robust}; we also report a score-level anchor-clip ablation to separate aggregation-family effects from the certified median rule. Let $\widetilde\mu_u^t=\sum_b\widetilde\beta_u^t(\texttt{Malicious},b)$ be the filtered malicious marginal and let $\gamma=0.05$; for scalar reports $\mathbf c_u^t\in[0,1]^m$, define
$g^{\mathrm{ac}}(\mathbf c_u^t;\widetilde\mu_u^t,\gamma)
=
\frac{1}{m}\sum_{i=1}^m
\Pi_{[\widetilde\mu_u^t-\gamma,\widetilde\mu_u^t+\gamma]}(c_{i,u}^t)$.
Since every clipped report lies in $[\widetilde\mu_u^t-\gamma,\widetilde\mu_u^t+\gamma]\cap[0,1]$, so does their average, even if all scalar reports are adversarial; unlike median aggregation, this gives posterior-anchor containment rather than an honest-range guarantee over ordered intervention actions.

\begin{figure*}[t]
\centering
\begin{tikzpicture}
\begin{groupplot}[
  group style={group size=4 by 2, horizontal sep=1.1cm, vertical sep=0.9cm},
  width=0.29\textwidth,
  height=0.25\textwidth,
  grid=both,
  minor grid style={draw=black!7},
  major grid style={draw=black!12},
  tick label style={font=\scriptsize},
  label style={font=\scriptsize},
  title style={font=\scriptsize},
  legend style={font=\scriptsize, draw=none, fill=none},
  symbolic x coords={0.75, 0.85, 0.90},
  xtick=data,
  xticklabels={$0.75$,$0.85$,$0.90$},
  enlarge x limits=0.25, 
 ylabel shift={-3pt}, 
  every axis plot/.append style={thick},
  unbounded coords=jump,
]


\nextgroupplot[
  title={Pre-Exfil (\%)},
  ylabel={\%},
  ymin=0, ymax=30,
  legend to name=leg_userearly_new,
  legend columns=4
]
\addplot+[blue!75!black, mark=*] coordinates {(0.75,28.3333) (0.85,10.0000) (0.90,0.0000)};
\addplot+[black, dashed, mark=diamond*] coordinates {(0.75,25.0) (0.85,10.0) (0.90,7.5)};
\addplot+[red!75!black, mark=square*] coordinates {(0.75,0.0000) (0.85,0.0000) (0.90,0.0000)};
\addplot+[green!50!black, mark=triangle*] coordinates {(0.75,5.0000) (0.85,5.0000) (0.90,0.0000)};
\legend{BTCE (median), BTCE (anchor-clip), Transformer-UBS, HOLMESLite}

\nextgroupplot[
  title={MTTD (days)},
  ylabel={Days},
  ymin=0, ymax=3
]
\addplot+[blue!75!black, mark=*] coordinates {(0.75,0.7583) (0.85,1.6250) (0.90,2.6250)};
\addplot+[black, dashed, mark=diamond*] coordinates {(0.75,0.7878) (0.85,1.7578) (0.90,2.6279)};
\addplot+[red!75!black, mark=square*] coordinates {(0.75,nan) (0.85,nan) (0.90,nan)};
\addplot+[green!50!black, mark=triangle*] coordinates {(0.75,0.0833) (0.85,0.2583) (0.90,0.3083)};

\nextgroupplot[
  title={UFP (\%)},
  ylabel={\%},
  ymin=0, ymax=2.0
]
\addplot+[blue!75!black, mark=*] coordinates {(0.75,1.5351) (0.85,0.1378) (0.90,0.0125)};
\addplot+[black, dashed, mark=diamond*] coordinates {(0.75,1.6353) (0.85,0.1629) (0.90,0.0063)};
\addplot+[red!75!black, mark=square*] coordinates {(0.75,1.0338) (0.85,0.9336) (0.90,0.6955)};
\addplot+[green!50!black, mark=triangle*] coordinates {(0.75,0.3321) (0.85,0.0439) (0.90,0.0063)};

\nextgroupplot[
  title={MFN (\%)},
  ylabel={\%},
  ymin=0, ymax=105
]
\addplot+[blue!75!black, mark=*] coordinates {(0.75,0.0) (0.85,0.0) (0.90,0.0)};
\addplot+[black, dashed, mark=diamond*] coordinates {(0.75,2.5) (0.85,2.5) (0.90,7.5)};
\addplot+[red!75!black, mark=square*] coordinates {(0.75,100.0) (0.85,100.0) (0.90,100.0)};
\addplot+[green!50!black, mark=triangle*] coordinates {(0.75,0.0) (0.85,0.0) (0.90,0.0)};


\nextgroupplot[
  title={},
  xlabel={},
  ylabel={\%},
  ymin=0, ymax=25
]
\addplot+[blue!75!black, mark=*] coordinates {(0.75,17.5000) (0.85,5.000) (0.90,0.0000)};
\addplot+[black, dashed, mark=diamond*] coordinates {(0.75,21.1111) (0.85,8.0000) (0.90,2.0000)};
\addplot+[red!75!black, mark=square*] coordinates {(0.75,5.0000) (0.85,5.0000) (0.90,5.0000)};
\addplot+[green!50!black, mark=triangle*] coordinates {(0.75,15.0000) (0.85,7.5000) (0.90,0.0000)};

\nextgroupplot[
  title={},
  xlabel={},
  ylabel={Days},
  ymin=0, ymax=3.3
]
\addplot+[blue!75!black, mark=*] coordinates {(0.75,0.9250) (0.85,1.9917) (0.90,2.9167)};
\addplot+[black, dashed, mark=diamond*] coordinates {(0.75,1.0750) (0.85,2.0917) (0.90,3.0250)};
\addplot+[red!75!black, mark=square*] coordinates {(0.75,0.0000) (0.85,0.0000) (0.90,0.0000)};
\addplot+[green!50!black, mark=triangle*] coordinates {(0.75,0.0750) (0.85,0.1250) (0.90,0.1500)};

\nextgroupplot[
  title={},
  xlabel={},
  ylabel={\%},
  ymin=0, ymax=2.0
]
\addplot+[blue!75!black, mark=*] coordinates {(0.75,1.5789) (0.85,0.1566) (0.90,0.0000)};
\addplot+[black, dashed, mark=diamond*] coordinates {(0.75,1.6729) (0.85,0.1253) (0.90,0.0063)};
\addplot+[red!75!black, mark=square*] coordinates {(0.75,1.7043) (0.85,1.5351) (0.90,1.4536)};
\addplot+[green!50!black, mark=triangle*] coordinates {(0.75,0.3258) (0.85,0.0439) (0.90,0.0188)};

\nextgroupplot[
  title={},
  xlabel={},
  ylabel={\%},
  ymin=0, ymax=105
]
\addplot+[blue!75!black, mark=*] coordinates {(0.75,0.0) (0.85,2.5) (0.90,2.5)};
\addplot+[black, dashed, mark=diamond*] coordinates {(0.75,0.0) (0.85,2.5) (0.90,7.5)};
\addplot+[red!75!black, mark=square*] coordinates {(0.75,95.0) (0.85,95.0) (0.90,95.0)};
\addplot+[green!50!black, mark=triangle*] coordinates {(0.75,0.0) (0.85,0.0) (0.90,0.0)};

\end{groupplot}

\node[anchor=south] at ($(group c1r1.north)!0.5!(group c4r1.north) + (0,5mm)$)
{\pgfplotslegendfromname{leg_userearly_new}};

\end{tikzpicture}
\vspace{-2.5mm}
\caption{UserEarly baseline comparison across thresholds $\theta$, faceted by the Byzantine mix used to generate the closed-loop trajectories (top row $f=3$, bottom row $f=9$).}
\label{fig:userearly_baseline_plot}
\vspace{-3mm}
\end{figure*}
\section{Results and Discussion} 
\label{sec:empirical}
We evaluate on CERT r6.2~\cite{CERT2016} (January 2010--June 2011), using daily windows with $N=4000$ users and $0.125\%$ malicious prevalence ($5$ malicious users per window). Logs are mapped to continuous channel scores in $[0,1]$, strategic deviations follow Sec.~\ref{sec:btce-derivations}, and the committee has $m=25$ certifiers running Algorithm~\ref{alg:intervention}, of which $f$ are Byzantine ($|\mathcal C_{\mathrm{byz}}|=f$). We compare BTCE against Transformer-UBS trained under the same $5/4000$ imbalance and evaluated on the same closed-loop trajectories generated under each Byzantine mix \(f\). We also evaluate against HOLMESLite, a streaming provenance-style baseline using exponentially decayed z-score aggregation with cross-channel bonuses~\cite{milajerdi2019holmes}. For BTCE, $t_{\mathrm{intervene}}(u)=\inf\{t\ge t_0:\mu_u^{t,\mathrm{sys}}(M)\ge\mu_{\mathrm{thr}}\}$ (we write $\theta$ for the threshold value $\mu_{\mathrm{thr}}$ below); for baselines it is the first calibrated threshold crossing. With first exfiltration $t_{\mathrm{exfil}}(u)$ and $\Delta_u=t_{\mathrm{intervene}}(u)-t_{\mathrm{exfil}}(u)$, we report Pre-Exfil $(\Delta_u<0)$, $\Delta_u$ mean/median, MTTD, UFP, MFN, and F1 under \texttt{Period}, \texttt{UserEver}, and \texttt{UserEarly} units. Here MTTD (mean time to detection) is the average gap $t_{\mathrm{intervene}}-t_0$ from the detection-window start to intervention, UFP (user false-positive rate) is the share of benign users that ever receive an intervention, and MFN (malicious false-negative rate) is the share of malicious users never flagged. The rational ablation sets $\psi_{\mathrm{pb}}=1$; concept-drift runs use two counterfactual rollout replicates and introduce a benign covariate shift at $t_d\in\{5,10,15\}$ with shift magnitude $0.12$ and noise scale $0.03$. Shift magnitude is the benign mean displacement after drift; noise scale is the standard deviation of added per-channel perturbations.
Across these runs, the median-aggregation variant satisfies Proposition~\ref{prop:median-robust}: when $f<m/2$, the implemented action remains within the honest recommendation range; the anchor-clip variant satisfies posterior-anchor containment, with the aggregate score remaining within $[\widetilde\mu_u^t-\gamma,\widetilde\mu_u^t+\gamma]$.
\begin{table}[t]
\caption{Behavioral vs.\ rational BTCE on early-warning timing under median aggregation over 10 runs.
$\Delta=t_{\mathrm{intervene}}-t_{\mathrm{exfil}}$ and
MTTD $=t_{\mathrm{intervene}}-t_0$.}
\label{tab:beh_vs_rat}
\centering
\scriptsize
\setlength{\tabcolsep}{2.6pt}
\renewcommand{\arraystretch}{1.1}
\begin{tabular}{@{}llcccccccc@{}}
\toprule
 &  & \multicolumn{2}{c}{$\boldsymbol{\Delta}$ mean}
    & \multicolumn{2}{c}{$\boldsymbol{\Delta}$ median}
    & \multicolumn{2}{c}{\textbf{MTTD}}
    & \multicolumn{2}{c}{\textbf{Earlier by}} \\
\cmidrule(lr){3-4}\cmidrule(lr){5-6}\cmidrule(lr){7-8}\cmidrule(l){9-10}
$\boldsymbol{\theta}$ & \textbf{Model}
& $f=3$ & $f=9$
& $f=3$ & $f=9$
& $f=3$ & $f=9$
& $f=3$ & $f=9$ \\
\midrule
\multirow{2}{*}{0.75}
 & Behavioral & 0.55 & 0.40 & 0.90 & 0.90 & 0.76 & 0.93 & \multirow{2}{*}{1.70} & \multirow{2}{*}{1.76} \\
 & Rational   & 2.46 & 2.69 & 2.05 & 2.05 & 2.46 & 2.69 &       &       \\
\midrule
\multirow{2}{*}{0.85}
 & Behavioral & 1.73 & 1.76 & 2.00 & 1.95 & 1.63 & 1.99 & \multirow{2}{*}{3.39} & \multirow{2}{*}{2.62} \\
 & Rational   & 5.02 & 4.61 & 4.20 & 3.70 & 5.02 & 4.61 &       &       \\
\midrule
\multirow{2}{*}{0.90}
 & Behavioral & 2.88 & 2.88 & 3.10 & 2.90 & 2.63 & 2.92 & \multirow{2}{*}{4.50} & \multirow{2}{*}{4.01} \\
 & Rational   & 7.13 & 6.93 & 6.10 & 5.85 & 7.13 & 6.93 &       &       \\
\bottomrule
\end{tabular}
\end{table}
\paragraph{Findings.} We evaluate four key objectives below:

\begin{enumerate}[leftmargin=*,nosep]
\item Early detection with bounded user-level false alarms: At loose-to-moderate thresholds, where intervention is operationally meaningful, BTCE is the only method with useful pre-exfiltration coverage (Fig.~\ref{fig:userearly_baseline_plot}). At $\theta=0.75$, median BTCE achieves Pre-Exfil $28.3\%$ at $f=3$ and $17.5\%$ at $f=9$, with UFP $1.54\%$ and $1.58\%$, respectively. Transformer-UBS misses every malicious user in the $f=3$ runs and $95\%$ of malicious users in the $f=9$ runs; HOLMESLite's MTTD is small, but that speed buys little early warning because its alerts land within $0.08$ days of exfiltration on average, making it effectively post-hoc. At $\theta=0.90$, all methods lose meaningful pre-exfiltration coverage because the required evidence does not accumulate before the exfiltration window closes.  Fig.~\ref{fig:btce_f1_period_vs_userever} shows the within-BTCE tradeoff: looser thresholds improve Pre-Exfil and MTTD, conservative thresholds improve \texttt{UserEver}-F1, and intermediate thresholds maximize \texttt{Period}-F1. Anchor-clip improves \texttt{Period}-F1: in the closed-loop simulator, the aggregation rule changes the belief path, user actions, and attack-window labels, keeping the score above threshold across more of the kill-chain window, even when \texttt{UserEver}-F1 is similar. Median remains the headline mechanism because it trades this empirical aggressiveness for the honest-range guarantee of Proposition~\ref{prop:median-robust}.

\item Behavioral vs.\ rational: Removing present bias delays detection at every threshold and Byzantine mix (Table~\ref{tab:beh_vs_rat}). Behavioral BTCE has lower MTTD than rational BTCE throughout, confirming that impulsive escalation is the mechanism by which evidence drift becomes positive in finite time (Theorem~\ref{thm:early-detection}). At $f=9$, anchor-clip preserves stronger timing-sensitive classification in Fig.~\ref{fig:btce_f1_period_vs_userever}, but its absolute behavioral $\Delta_u$ and MTTD are comparable to or worse than median.
 
\begin{figure*}[t]
\centering
\begin{tikzpicture}
\begin{groupplot}[
  group style={
    group size=4 by 2, 
    horizontal sep=0.9cm, 
    vertical sep=0.5cm
  },
  width=0.29\textwidth,
  height=0.25\textwidth,
  grid=both,
  minor grid style={draw=black!7},
  major grid style={draw=black!12},
  symbolic x coords={0.75, 0.85, 0.90},
  xtick=data,
  xticklabels={$0.75$,$0.85$,$0.90$},
  enlarge x limits=0.25,
  tick label style={font=\scriptsize},
  label style={font=\scriptsize},
  title style={font=\scriptsize},
  legend style={font=\scriptsize, draw=none, fill=none},
  every axis plot/.append style={thick},
]
\nextgroupplot[
  title={Pre-Exfil (\%)},
  ylabel={\%},
  ymin=0, ymax=30,
  ylabel shift={-4pt},
  legend to name=leg_f1_line,
  legend columns=2
]
\addplot+[blue!75!black, mark=*] coordinates {
  (0.75,28.3333) (0.85,10.0) (0.90,0.0000)
};
\addplot+[gray!70!black, mark=square*] coordinates {
  (0.75,25.0) (0.85,10.0) (0.90,7.5)
};
\legend{BTCE (median), BTCE (anchor-clip)}
\nextgroupplot[
  title={MTTD (days)},
  ylabel={Days},
  ymin=0.6, ymax=3,
  ylabel shift={-4pt}
]
\addplot+[blue!75!black, mark=*] coordinates {
  (0.75,0.7583) (0.85,1.6250) (0.90,2.6250)
};
\addplot+[gray!70!black, mark=square*] coordinates {
  (0.75,0.9083) (0.85,1.7750) (0.90,2.5333)
};
\nextgroupplot[
  title={F1 (Period)},
  ylabel={F1},
  ymin=0.2, ymax=0.8,
  ylabel shift={-4pt}
]
\addplot+[blue!75!black, mark=*] coordinates {
  (0.75,0.3304) (0.85,0.3746) (0.90,0.2261)
};
\addplot+[gray!70!black, mark=square*] coordinates {
  (0.75,0.5825) (0.85,0.6988) (0.90,0.6591)
};
\nextgroupplot[
  title={F1 (UserEver)},
  ylabel={F1},
  ymin=0.2, ymax=1.0,
  ylabel shift={-4pt}
]
\addplot+[blue!75!black, mark=*] coordinates {
  (0.75,0.2356) (0.85,0.7398) (0.90,0.9409)
};
\addplot+[gray!70!black, mark=square*] coordinates {
  (0.75,0.2277) (0.85,0.7435) (0.90,0.9387)
};
\nextgroupplot[
  title={},
  xlabel={},
  ylabel={\%},
  ymin=0, ymax=20,
  ylabel shift={-4pt}
]
\addplot+[blue!75!black, mark=*] coordinates {
  (0.75,17.5) (0.85,5.0000) (0.90,0.0000)
};
\addplot+[gray!70!black, mark=square*] coordinates {
  (0.75,15.0) (0.85,2.5) (0.90,2.5)
};
\nextgroupplot[
  title={},
  xlabel={},
  ylabel={Days},
  ymin=0.6, ymax=3.2,
  ylabel shift={-4pt}
]
\addplot+[blue!75!black, mark=*] coordinates {
  (0.75,0.9250) (0.85,1.9917) (0.90,2.9167)
};
\addplot+[gray!70!black, mark=square*] coordinates {
  (0.75,1.0750) (0.85,2.0917) (0.90,3.0250)
};
\nextgroupplot[
  title={},
  xlabel={},
  ylabel={F1},
  ymin=0.2, ymax=0.8,
  ylabel shift={-4pt}
]
\addplot+[blue!75!black, mark=*] coordinates {
  (0.75,0.3325) (0.85,0.3720) (0.90,0.2235)
};
\addplot+[gray!70!black, mark=square*] coordinates {
  (0.75,0.5793) (0.85,0.6952) (0.90,0.6470)
};
\nextgroupplot[
  title={},
  xlabel={},
  ylabel={F1},
  ymin=0.2, ymax=1.0,
  ylabel shift={-4pt}
]
\addplot+[blue!75!black, mark=*] coordinates {
  (0.75,0.2289) (0.85,0.7505) (0.90,0.9474)
};
\addplot+[gray!70!black, mark=square*] coordinates {
  (0.75,0.2290) (0.85,0.7640) (0.90,0.9183)
};
\end{groupplot}
\node[anchor=south] at ($(group c1r1.north)!0.5!(group c4r1.north) + (0,8mm)$)
{\pgfplotslegendfromname{leg_f1_line}};
\end{tikzpicture}
\caption{BTCE tradeoff between timing-sensitive detection and user-level capture (top row $f=3$, bottom row $f=9$). Lower thresholds ($\theta$) improve Pre-Exfil and MTTD; more conservative thresholds increase \texttt{UserEver}-F1.}
\label{fig:btce_f1_period_vs_userever}
\vspace{-3mm}
\end{figure*}
\item Equilibrium compliance via one-period regret: We audit on-path states using $\mathrm{Regret}(I_u^t)=\max_{x\in A_u}\{W(x\mid I_u^t)-W(\hat{x}\mid I_u^t)\}$, estimating $W$ by Monte Carlo rollouts of an obedient baseline against a one-period deviation. Fig.~\ref{fig:merger_regret_ablation} top panel reports the audit as excess regret normalized by peak no-drift regret for drift-perturbed runs. In no-drift runs, non-target users exhibit stable regret, while target regret rises during the exfiltration window (Day~7--20) and declines after intervention, supporting restoration of obedience on the active population (Lemma~\ref{lem:oneshot} and Theorem~\ref{thm:obedience}). 
 
\item Robustness to concept drift: We introduce a single change-point covariate shift in the benign signal distribution at $t_d\in\{5,10,15\}$ while holding the attack process fixed, a stylized merger or system migration. MTTD remains bounded across drift start times while UFP grows with longer exposure to the shifted benign regime (Fig.~\ref{fig:merger_regret_ablation} bottom panel). Covariate shift perturbs non-target excess regret by at most $1.3\%$ of peak no-drift regret throughout the horizon, while target excess regret dips to roughly $-41\%$ mid-window and contracts to under $-10\%$ by day~29, indicating that the median mechanism re-enters a reduced post-intervention equilibrium even under benign-distribution shift, but at the cost of a larger false-positive budget when drift arrives earlier.
\end{enumerate}

\section{Conclusion}
BTCE achieves up to $28.3\%$ pre-exfiltration detection at $1.54\%$ UFP and detects $1.7$--$4.5$ days earlier than a rational ablation, while both a transformer baseline and a streaming provenance system reach near-zero pre-exfiltration coverage under identical constraints. Each outcome traces back to a formal guarantee: calibrated present-bias sanctions sustain current-self obedience, producing the impulsive escalation that drives positive evidence drift and bounds expected detection time before exfiltration; one-shot obedience rules out patient multi-period alternatives; and median aggregation preserves $17.5\%$ pre-exfiltration coverage even under nine Byzantine certifiers. Under covariate shift the mechanism re-enters a reduced post-intervention equilibrium, with MTTD bounded across all drift start times, though earlier drift widens the false-positive budget. Together these results show that grounding insider-threat detection in equilibrium coordination makes early warning auditable and creates opportunities for robust implementation in a way that purely predictive baselines cannot match.
\begin{figure*}[t]
\centering
\begin{tikzpicture}
\begin{groupplot}[
  group style={
    group size=2 by 2,
    horizontal sep=1.55cm,
    vertical sep=1.5cm 
  },
  width=0.48\textwidth,
  height=0.26\textwidth,
  grid=both,
  minor grid style={draw=black!7},
  major grid style={draw=black!12},
  tick label style={font=\scriptsize},
  label style={font=\scriptsize},
  title style={font=\scriptsize},
  legend style={font=\scriptsize, draw=none, fill=none},
  ylabel shift={-2pt},
]

\nextgroupplot[
  title={Restoration (Targets)},
  ylabel={Excess regret (\%)},
  xlabel={Day},
  xtick={0,5,10,15,20,25,29},
  ymin=-45, ymax=5,
]
\addplot[black!35, dashed, thin, forget plot] coordinates {(5,-45) (5,5)};
\addplot[black!35, dashed, thin, forget plot] coordinates {(10,-45) (10,5)};
\addplot[black!35, dashed, thin, forget plot] coordinates {(15,-45) (15,5)};

\addplot+[red, thick, mark=none] coordinates {
 (0,0) (1,0) (2,0) (3,0) (4,0) (5,0) (6,0) (7,0) (8,0) (9,0)
 (10,0) (11,0) (12,0) (13,0) (14,0) (15,0) (16,0) (17,0) (18,0) (19,0)
 (20,0) (21,0) (22,0) (23,0) (24,0) (25,0) (26,0) (27,0) (28,0) (29,0)
};

\addplot+[black, thick, dashed, mark=none] coordinates {
 (0,0.0000) (1,0.0000) (2,0.0000) (3,0.0000) (4,0.0000)
 (5,0.0000) (6,-0.3779) (7,-0.6540) (8,-0.3613) (9,-0.3329)
 (10,-1.9019) (11,-2.3114) (12,-3.7350) (13,-6.0610) (14,-8.6578)
 (15,-16.7858) (16,-24.5419) (17,-28.9071) (18,-29.4752) (19,-30.8076)
 (20,-38.3966) (21,-38.5974) (22,-39.6464) (23,-41.2372) (24,-38.2433)
 (25,-23.8397) (26,-19.1468) (27,-10.8166) (28,-8.5230) (29,-8.5217)
};

\addplot+[black, thick, dotted, mark=none] coordinates {
 (0,0.0000) (1,0.0000) (2,0.0000) (3,0.0000) (4,0.0000)
 (5,0.0000) (6,0.0000) (7,0.0000) (8,0.0000) (9,0.0000)
 (10,0.0000) (11,-0.5219) (12,-0.1618) (13,-2.0683) (14,-2.1661)
 (15,-6.0260) (16,-10.9699) (17,-16.0823) (18,-17.3982) (19,-16.3075)
 (20,-22.8356) (21,-24.6231) (22,-27.3606) (23,-33.9019) (24,-27.8693)
 (25,-20.2553) (26,-21.8830) (27,-10.9567) (28,-9.7537) (29,-8.5217)
};

\addplot+[black, thick, dashdotted, mark=none] coordinates {
 (0,0.0000) (1,0.0000) (2,0.0000) (3,0.0000) (4,0.0000)
 (5,0.0000) (6,0.0000) (7,0.0000) (8,0.0000) (9,0.0000)
 (10,0.0000) (11,0.0000) (12,0.0000) (13,0.0000) (14,0.0000)
 (15,0.0000) (16,-0.0456) (17,0.0390) (18,-3.1590) (19,-4.3157)
 (20,-6.4527) (21,-17.6399) (22,-19.8794) (23,-25.2243) (24,-27.5687)
 (25,-15.9653) (26,-15.6357) (27,-4.7299) (28,-4.8759) (29,-4.8699)
};

\nextgroupplot[
  title={Restoration (Non-targets)},
  ylabel={Excess regret (\%)},
  xlabel={Day},
  xtick={0,5,10,15,20,25,29},
  ymin=-1.5, ymax=0.1,
  yticklabel style={/pgf/number format/fixed}
]
\addplot[black!35, dashed, thin, forget plot] coordinates {(5,-1.5) (5,0.1)};
\addplot[black!35, dashed, thin, forget plot] coordinates {(10,-1.5) (10,0.1)};
\addplot[black!35, dashed, thin, forget plot] coordinates {(15,-1.5) (15,0.1)};

\addplot+[blue, thick, mark=none] coordinates {
 (0,0) (1,0) (2,0) (3,0) (4,0) (5,0) (6,0) (7,0) (8,0) (9,0)
 (10,0) (11,0) (12,0) (13,0) (14,0) (15,0) (16,0) (17,0) (18,0) (19,0)
 (20,0) (21,0) (22,0) (23,0) (24,0) (25,0) (26,0) (27,0) (28,0) (29,0)
};

\addplot+[black!60, thick, dashed, mark=none] coordinates {
 (0,0.0000) (1,0.0000) (2,0.0000) (3,0.0000) (4,0.0000)
 (5,0.0000) (6,-0.6266) (7,-0.9602) (8,-1.1088) (9,-1.1601)
 (10,-1.1750) (11,-1.1925) (12,-1.2722) (13,-1.2236) (14,-1.1979)
 (15,-1.1750) (16,-1.1439) (17,-1.1979) (18,-1.2438) (19,-1.1831)
 (20,-1.0588) (21,-1.2425) (22,-1.2276) (23,-1.3033) (24,-1.0696)
 (25,-1.0858) (26,-1.0534) (27,-1.1236) (28,-1.1344) (29,-1.1966)
};

\addplot+[black!60, thick, dotted, mark=none] coordinates {
 (0,0.0000) (1,0.0000) (2,0.0000) (3,0.0000) (4,0.0000)
 (5,0.0000) (6,0.0000) (7,0.0000) (8,0.0000) (9,0.0000)
 (10,0.0000) (11,-0.5159) (12,-0.8603) (13,-0.9724) (14,-1.0480)
 (15,-1.1290) (16,-1.1074) (17,-1.2141) (18,-1.1628) (19,-1.1925)
 (20,-1.1804) (21,-1.1371) (22,-1.1736) (23,-1.2020) (24,-1.0791)
 (25,-1.1412) (26,-1.2492) (27,-1.2371) (28,-1.1398) (29,-1.1993)
};

\addplot+[black!60, thick, dashdotted, mark=none] coordinates {
 (0,0.0000) (1,0.0000) (2,0.0000) (3,0.0000) (4,0.0000)
 (5,0.0000) (6,0.0000) (7,0.0000) (8,0.0000) (9,0.0000)
 (10,0.0000) (11,0.0000) (12,0.0000) (13,0.0000) (14,0.0000)
 (15,0.0000) (16,-0.4835) (17,-0.7441) (18,-0.9616) (19,-1.0399)
 (20,-1.0683) (21,-1.1763) (22,-1.1682) (23,-1.2114) (24,-1.1223)
 (25,-1.2155) (26,-1.2857) (27,-1.2114) (28,-1.1601) (29,-1.1034)
};

\nextgroupplot[
  title={MTTD},
  xlabel={Drift start},
  ylabel={Days},
  symbolic x coords={none,5,10,15},
  xtick=data,
  ymin=0, ymax=5,
]
\addplot+[
  red,
  thick,
  mark=*,
  error bars/.cd,
    y dir=both, y explicit,
] coordinates {
  (none,2.8844) +- (0,0.3509)
  (5,2.1510) +- (0,0.3504)
  (10,2.0980) +- (0,0.5492)
  (15,2.0969) +- (0,0.3000)
};

\nextgroupplot[
  title={User false positives},
  xlabel={Drift start},
  ylabel={\%},
  symbolic x coords={none,5,10,15},
  xtick=data,
  ymin=0, ymax=45,
]
\addplot+[
  black!65,
  thick,
  mark=*,
  error bars/.cd,
    y dir=both, y explicit,
] coordinates {
  (none,0.0251) +- (0,0.0167)
  (5,38.3609) +- (0,1.0322)
  (10,30.4912) +- (0,0.9358)
  (15,22.0927) +- (0,0.7593)
};

\end{groupplot}
\end{tikzpicture}
\vspace{-2mm}
\caption{Concept-drift stress test for median aggregation. Top panels show restoration as excess one-step regret normalized by peak no-drift regret; solid lines are the no-drift reference, drifts start at days 5, 10, and 15. Bottom panels report MTTD and user false positives as a function of drift start time.}
\label{fig:merger_regret_ablation}
\vspace{-4mm}
\end{figure*}

\paragraph{Limitations and future work}
The early-detection guarantee assumes a positive drift floor $\lambda_{\min}>0$: a patient adversary can stay within the benign noise band and drive $\lambda_{\min}$ toward zero, at the cost of exfiltration throughput. Documented insiders do not appear to behave this way, present bias produces impulsive escalation rather than slow extraction, but a formal stealth--throughput tradeoff parameterized by $G_{\mathrm{sig}}$ remains open. Because drift events are typically scheduled (mergers and acquisitions), a merger-aware refit of the benign emission model on post-$t_d$ samples is a natural direction for restoring the UFP budget. ITS weights and conditional probability tables should be periodically recalibrated; the DAG $G_{\mathrm{sig}}$ itself is fixed by domain knowledge (\S\ref{def:exploit-chain}). We assume trusted implementation of $g_\kappa$ and the belief filter.

\appendix
\section{Type Evolution Dynamics}
\label{app:type-evolution}

\begin{definition}[Reference-point dynamics]
\label{def:ref-point-dynamics}
Set $r_u^0=0$ (neutral baseline). Let $\lambda_r\in[0,1]$, $\kappa_r\ge 0$, and
$\mathbb{I}_{\mathrm{adverse}}:A_c\to\{0,1\}$. The pre-discretization update is $\widetilde r_u^{t+1}
\;=\;
(1-\lambda_r)\,r_u^t
\;+\;
\lambda_r\,\kappa_r\,\mathbb{I}_{\mathrm{adverse}}(d_u^t)$. Since $B$ is finite, $r_u^{t+1}$ is obtained by stochastic rounding $\widetilde r_u^{t+1}$ to the two adjacent
points of $\mathcal{R}=\{r^{(1)}<\cdots<r^{(K_r)}\}\subseteq\mathbb{R}$ with probabilities proportional to
distance (unbiased). Absent adverse decisions $r_u^t$ decays toward $0$; persistent adverse outcomes raise it toward $\kappa_r$, deepening the loss domain and triggering risk-seeking convexity in $v(\cdot)$.
\end{definition}
\begin{definition}[Present-bias dynamics]
\label{def:present-bias-dynamics}
Let $\psi_{\mathrm{impulsive}}<\psi_{\mathrm{stable}}$ in $(0,1]$, threshold $\gamma_{\mathrm{stress}}\in\mathbb{R}$,
and stress proxy $\omega_u^t\coloneqq r_u^t-\rho(x_u^t)$ (since $x_u^t$ is unobserved, $\omega_u^t$ is inferred
from the public signal). The pre-discretization update is
$\widetilde\psi_{\mathrm{pb},u}^{t+1}
=
\begin{cases}
\psi_{\mathrm{impulsive}} & \text{if } \omega_u^t \ge \gamma_{\mathrm{stress}},\\
\psi_{\mathrm{stable}}   & \text{otherwise,}
\end{cases}$
and $\psi_{\mathrm{pb},u}^{t+1}$ is obtained by the same stochastic-rounding rule applied to
$\Psi=\{\psi^{(1)}<\cdots<\psi^{(K_\psi)}\}\subseteq(0,1]$.
\end{definition}

\noindent Definitions~\ref{def:ref-point-dynamics}--\ref{def:present-bias-dynamics} instantiate
$\tau_B(b_u^{t+1}\mid b_u^t,\theta_u^t,x_u^t,d_u^t)\in\Delta(B)$ from Def.~\ref{def:markov-behavioral}.
A deterministic specialization sets
$\tau_B(b'\mid b,\theta,x,d)=\mathbb{I}\!\{b'=F(b,\theta,x,d)\}$,
where $F:B\times\Theta_{\mathrm{user}}\times A_u\times A_c\to B$ applies the above updates and rounds to the grid.

\bibliographystyle{splncs04}
\bibliography{temporal}

\paragraph{Disclosure of Interests.} The authors have no competing interests to declare that are relevant to the content of this article.
\end{document}